\newcommand{\ba}{{\mathbf{a}}}
\newcommand{\bb}{{\mathbf{b}}}
\newcommand{\bs}{{\mathbf{s}}}
\newcommand{\bu}{{\mathbf{u}}}
\newcommand{\bv}{{\mathbf{v}}}
\begin{document}

\title{The Linear Complexity of a Class of  Binary Sequences With Optimal Autocorrelation
} \subtitle{}


\author{Cuiling Fan}


\institute{Cuiling Fan \at
School of Mathematics, Southwest Jiaotong University, Chengdu 611756, China. She is also with State Key Laboratory of Information Security
(Institute of Information Engineering, Chinese Academy
of Sciences, Beijing 100093).\\
        \email{cuilingfan@163.com}
}

\date{}

\maketitle

\begin{abstract}
Binary sequences with optimal autocorrelation and large linear complexity
have important applications in cryptography and communications. Very recently,
a class of binary sequences of period $4p$ with optimal autocorrelation was proposed by
interleaving four suitable Ding-Helleseth-Lam sequences (Des. Codes Cryptogr., DOI
10.1007/s10623-017-0398-5), where $p$ is an odd prime with $p \equiv 1(\bmod 4)$. The objective
of this paper is to determine the minimal polynomial and the linear complexity of
this class of binary optimal sequences via a sequence polynomial approach. It turns out
that this class of sequences has quite good linear complexity.
 \subclass{94A55 \and 94B05}
\end{abstract}

\section{Introduction}

Let $\ba=(a(0),a(1),\cdots,a(N-1))$ be a binary sequence  of period $N$, its periodic autocorrelation is defined by
\begin{eqnarray*}
R_{\ba}(\tau)= \sum\limits_{i=0}^{N-1}(-1)^{a(i)+a(i+\tau)}.
\end{eqnarray*}
Herein and hereafter the addition $i+\tau$ is performed modulo $N$. Let  $\mathbb{Z}_N$ denote the ring of integers modulo $N$. The set
$$
C_{\ba}=\{t\in \mathbb{Z}_N: a(t)=1\}
$$
is called the support of $\ba$, and $\ba$ is called the characteristic sequence of the set $C_{\ba}$. It is easily verified that
\begin{eqnarray}\label{eqn-relation}
R_{\ba}(\tau)= N-4|(C_{\ba}+\tau)\cap C_{\ba}|, \tau\in \mathbb{Z}_N.
\end{eqnarray}

By (\ref{eqn-relation}), one has $R_{\ba}(\tau)\equiv N~(\bmod~4)$ for each $1\leq \tau<N$. Therefore, the optimal value of out-of-phase autocorrelation of binary sequences can be classified into the following four types:
\begin{enumerate}
\item [(A)] $R_{\ba}(\tau)=0$ for $N\equiv 0 \pmod{4}$;
\item [(B)] $R_{\ba}(\tau)\in \{1,-3\}$ for $N\equiv 1 \pmod{4}$;
\item [(C)] $R_{\ba}(\tau)\in \{\pm 2\}$ for $N\equiv 2 \pmod{4}$; and
\item [(D)] $R_{\ba}(\tau)=-1$ for $N\equiv 3 \pmod{4}$.
\end{enumerate}

The sequences in Types (A) and (D) are called perfect sequences and ideal sequences with two-level autocorrelation, respectively. The only known perfect binary sequence up to equivalence is the $(0, 0, 0, 1)$.
It is conjectured that there is no perfect binary sequences of period $N>4$ which
is widely believed to be true in both mathematical and engineer societies.
Hence, it is natural to consider the next smallest value for the out-of-phase autocorrelation of a binary sequence of period $N\equiv 0 \pmod{4}$. That is,
$R_{\ba}(\tau)\in \{0,\pm 4\}$.
If $R_{\ba}(\tau)\in \{0,-4\}$  when $\tau$ ranges from $1$ to $N-1$, then $\ba$ is referred to as a sequence with {\it optimal  autocorrelation value} (with respect to the values) \cite{TG10}.
If $R_{\ba}(\tau)\in \{0,\pm 4\}$ when $\tau$ ranges from $1$ to $N-1$, then  $\ba$ is referred to as a sequence with {\it optimal  autocorrelation magnitude} (with respect to the magnitude of the autocorrelation values)  \cite{Yu2008}.

Binary  sequences with optimal autocorrelation value/magnitude have important applications in many areas of cryptography, communication and radar \cite{GG}.  Finding new binary sequences with optimal autocorrelation value/magnitude  has been an interesting research topic in sequence design. During the last four decades, numerous constructions of binary
sequences with optimal autocorrelation have been reported in the literature (see \cite{Sidelnikov}, \cite{Lempel}, \cite{Ding-H-L}, \cite{Arasu}, \cite{Yu2008}, \cite{CD09}, \cite{TG10} and the references therein).

The linear complexity of a sequence is often defined in
terms of the shortest linear feedback shift register that can generate the sequence.  In order to  resist the well-known Berlekamp-Massey algorithm \cite{Massey},
the employed sequences should  have large linear complexity from the view point of cryptography.  A well-rounded treatment of the linear complexity of sequences with optimal autocorrelation was given in \cite{Wang-Du} and \cite{Li-Tang}.

Very recently,  a new class of binary sequences with optimal autocorrelation magnitude was proposed in \cite{Su-Y-F}. This construction is given as follows. Let $p$ be an odd prime with $p\equiv 1~(\bmod~ 4)$, ${\ba}_0,{\ba}_1,{\ba}_2,{\ba}_3$ be four binary sequences of period $p$ and $\bb=(b(0),b(1),b(2),b(3))$ be a binary sequence of period $4$. Then a binary sequence of period $4p$ can be obtained as below:
\begin{eqnarray}\label{eq-construction}
\bu=I({\ba}_0+b(0), L^d({\ba}_1)+b(1), L^{2d}({\ba}_2)+b(2), L^{3d}({\ba}_2)+b(3)),
\end{eqnarray}
where $I$ and $L$ denote the interleaved operator and the left cyclic shift operator respectively, and $d$ is a positive integer satisfying $4d\equiv 1\pmod{p}$.
It was shown in \cite{Su-Y-F} that the sequence $\bu$ obtained from \eqref{eq-construction} is optimal with respect to the autocorrelation magnitude, i.e., $R_{\bu}(\tau)\in\{0,\pm4\}$ for all $0<\tau<4p$, if the sequences ${\ba}_0,{\ba}_1,{\ba}_2,{\ba}_3$ are chosen to be some Ding-Helleseth-Lam sequences and the sequence $\bb$ satisfies $b(0)=b(2)$ and $b(1)=b(3)$.

The objective of this paper is to determine the minimal polynomial and linear complexity of the  optimal sequences proposed  in \cite{Su-Y-F} based on the sequence polynomial approach.
It turns out that this class of sequences has quite good linear complexity.

\section{Preliminaries}
In this section, we present some basic notation and results on sequences which will be needed in the sequel.

\subsection{Interleaved structure}

Let $\{{\ba}_0,{\ba}_1,\cdots,{\ba}_{T-1}\}$ be a set of $T$ sequences of period $N$. An $N\times T$ matrix $U$ is formed by placing the sequence ${\ba}_i$ on the $i$-th column, where $0\leq i\leq T-1$. Then one can obtain an interleaved sequence $\bu$ of period $NT$ by concatenating the successive rows of the matrix $U$. For simplicity, the interleaved sequence $\bu$ can be written as
\[\bu=I({\ba}_0,{\ba}_1,\cdots,{\ba}_{T-1}),\]
where $I$ denotes the interleaved operator. For more details on interleaved structure, the reader is referred to \cite{GG}.

\subsection{Linear complexity via the sequence polynomial approach}

Let $\bs=(s(i))_{i=0}^{\infty}$ be a sequence over a field $\mathbf{F}$ of period $N$. A polynomial of the form $f(x)=1+c_1x+c_2x^2+\cdots+c_rx^r\in\mathbf{F}[x]$ is called the characteristic polynomial of the sequence $\bs$ if $s(i)=c_1s({i-1})+c_2s({i-2})+\cdots+c_rs({i-r})$ holds for any $i\geq r$, where $\mathbf{F}[x]$ denotes the set of all the polynomials in $x$ over $\mathbf{F}$. The minimal polynomial $\mathbb{M}_{\bs}(x)$ of the sequence $\bs$ is the monic polynomial with the lowest degree in all characteristic polynomials of $\bs$, and the linear complexity of $\bs$ is then defined by the degree of $\mathbb{M}_{\bs}(x)$, that is ${\rm LC}(s)=\deg(\mathbb{M}_{\bs}(x))$. The sequence polynomial of $\bs$, denoted by $\mathbb{P}_{\bs}(x)$, is defined as
$$
\mathbb{P}_{\bs}(x)=\sum_{i=0}^{N-1}s(i)x^i \in \mathbf{F}[x].
$$

There are a few ways to determine the linear span and minimal polynomial of a periodic sequence. One of them is given in the following lemma via the sequence polynomial approach.

\begin{lemma}[\cite{DXS},  p. 87, Theorem 5.3]\label{lem-LC}
Let ${\bs}$ be a sequence over a finite field of period $N$. Then
\begin{enumerate}
  \item [1)]  the minimal polynomial of $\bs$ is $\mathbb{M}_{\bs}(x)=\frac{x^N-1}{\gcd(x^N-1,\mathbb{P}_{\bs}(x))}$; and
  \item [2)]  the linear complexity of $\bs$ is ${\rm LC}(\bs)=N-\deg(\gcd(x^N-1,\mathbb{P}_{\bs}(x))$.
  \end{enumerate}
\end{lemma}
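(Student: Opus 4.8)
The plan is to establish both parts at once by identifying the minimal polynomial with a single explicit divisor of $x^N-1$, working throughout with the generating function of $\bs$. First I would pass from the sequence to its formal power series $S(x)=\sum_{i=0}^{\infty}s(i)x^i$. Since $\bs$ is $N$-periodic, grouping the terms into blocks of length $N$ gives the closed form $S(x)=\mathbb{P}_{\bs}(x)\sum_{k\ge 0}x^{kN}=\mathbb{P}_{\bs}(x)/(1-x^N)$, so that $(1-x^N)S(x)=\mathbb{P}_{\bs}(x)$. This single identity is the bridge between the combinatorial data (the sequence polynomial) and the algebraic data (the characteristic polynomials).

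Next I would translate the recurrence condition into a divisibility statement. Writing $f(x)=\sum_{j=0}^{r}c_j x^j$ with $c_0=1$, the coefficient of $x^i$ in the product $f(x)S(x)$ equals $\sum_{j=0}^{r}c_j s(i-j)$ for every $i\ge r$; hence $f$ is a characteristic polynomial of $\bs$ precisely when this coefficient vanishes for all $i\ge r$, i.e. when $f(x)S(x)=g(x)$ for some polynomial $g$ with $\deg g<r=\deg f$. (Over $\mathbf{F}=\mathbb{F}_2$ the sign convention in the statement causes no trouble; in general one absorbs the signs into the $c_j$.) Combining this with the closed form of $S(x)$ yields $f(x)\mathbb{P}_{\bs}(x)=g(x)(1-x^N)$, so being a characteristic polynomial is equivalent to the condition $(x^N-1)\mid f(x)\mathbb{P}_{\bs}(x)$ together with the degree bound on $g$.

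The heart of the argument is then a coprimality reduction. Put $d(x)=\gcd(x^N-1,\mathbb{P}_{\bs}(x))$ and factor $x^N-1=d(x)u(x)$ and $\mathbb{P}_{\bs}(x)=d(x)v(x)$ with $\gcd(u,v)=1$. The divisibility $(x^N-1)\mid f(x)\mathbb{P}_{\bs}(x)$ becomes $u(x)\mid f(x)v(x)$, and coprimality forces $u(x)\mid f(x)$; thus every characteristic polynomial is a multiple of $u(x)=(x^N-1)/d(x)$. Conversely I would check that $u(x)$ is itself a characteristic polynomial: taking $f=u$ gives $f(x)S(x)=-v(x)$, and since $\deg v\le (N-1)-\deg d<N-\deg d=\deg u$, the required degree bound holds. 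Because $u(x)$ is monic, it is therefore exactly the minimal polynomial, giving $\mathbb{M}_{\bs}(x)=(x^N-1)/\gcd(x^N-1,\mathbb{P}_{\bs}(x))$; reading off degrees yields ${\rm LC}(\bs)=\deg\mathbb{M}_{\bs}(x)=N-\deg\gcd(x^N-1,\mathbb{P}_{\bs}(x))$.

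I expect the main obstacle to be the converse half, namely verifying that $u(x)$ is genuinely attained as a characteristic polynomial rather than merely dividing all of them. This is where the degree bookkeeping matters: one must confirm $\deg v<\deg u$, so that the associated $g$ has degree below $\deg f$, and reconcile the two normalizations in play, the constant-term-$1$ convention for characteristic polynomials and the monic convention for the minimal polynomial. Once that normalization is handled, the minimality of $u(x)$, and hence both formulas, follow immediately.
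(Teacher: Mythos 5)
Your proof is correct: the identity $(1-x^N)S(x)=\mathbb{P}_{\bs}(x)$, the translation of the recurrence into $f(x)S(x)$ being a polynomial of degree less than $\deg f$, and the coprimality step forcing $u(x)=(x^N-1)/\gcd(x^N-1,\mathbb{P}_{\bs}(x))$ to divide every characteristic polynomial while itself being one (via $\deg v<\deg u$) together pin down the minimal polynomial exactly, and you rightly flag the sign/normalization issue, which is vacuous over $\mathbb{F}_2$, the only case the paper uses. The paper offers no proof of its own, quoting the result from the cited reference [DXS, p.~87, Theorem 5.3], and your argument is essentially the standard one given there, so there is nothing to reconcile.
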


The following result gives relations of the sequence polynomials of some related sequences.

\begin{lemma}[\cite{Wang-Du},\cite{Li-Tang}]\label{lem-Wang}
Let $\ba$ be a binary sequence of period $N$. Then
\begin{enumerate}
  \item [1)]  $\mathbb{P}_{\bb}(x)=x^{N-\tau}\mathbb{P}_{\ba}(x)$ if $\bb=L^{\tau}(\ba)$;
  \item [2)]  $\mathbb{P}_{\bb}(x)=\mathbb{P}_{\ba}(x)+\frac{x^N-1}{x-1}$ if $\bb$ is the complement sequence of $\ba$ (i.e., $b(t)=a(t)+1$ for all $t$); and
  \item [3)]  $\mathbb{P}_{\bu}(x)=\mathbb{P}_{{\ba}_0}(x^4)+x\mathbb{P}_{{\ba}_1}(x^4)+x^2\mathbb{P}_{{\ba}_2}(x^4)+x^3\mathbb{P}_{{\ba}_3}(x^4)$ if $\bu=I({\ba}_0,{\ba}_1,{\ba}_2,{\ba}_3)$.
  \end{enumerate}
 \end{lemma}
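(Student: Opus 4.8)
The plan is to prove all three identities by direct evaluation of the defining sum $\mathbb{P}_{\bs}(x)=\sum_{i=0}^{N-1}s(i)x^i$, unwinding each construction at the level of individual coefficients. No deep idea is required; the content is bookkeeping of indices, together with the observation that the identities in parts~1) and~3) are to be read as equalities of sequence polynomials, i.e.\ (where exponents would exceed the period) as congruences modulo $x^N-1$. Since $\gcd(x^N-1,\cdot)$ is unchanged by reduction modulo $x^N-1$, this causes no loss when the lemma is later fed into Lemma~\ref{lem-LC}. I would treat the three parts separately, starting with the two that are immediate.

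For part~3), recall that the interleaved sequence $\bu=I({\ba}_0,{\ba}_1,{\ba}_2,{\ba}_3)$ of period $4N$ is read row-by-row from the matrix whose $i$-th column is ${\ba}_i$, so that $u(4k+i)=a_i(k)$ for $0\le i\le 3$ and $0\le k\le N-1$. Grouping the terms of $\mathbb{P}_{\bu}(x)=\sum_{j=0}^{4N-1}u(j)x^j$ according to the residue $i=j\bmod 4$ and writing $j=4k+i$ gives
$$
\mathbb{P}_{\bu}(x)=\sum_{i=0}^{3}x^i\sum_{k=0}^{N-1}a_i(k)\,(x^4)^k=\sum_{i=0}^{3}x^i\,\mathbb{P}_{{\ba}_i}(x^4),
$$
which is the claimed identity, and is in fact an exact polynomial identity since every exponent on the right stays below $4N$. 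For part~2), the binary complement satisfies $b(i)=a(i)+1$ over $\mathbf{F}_2$, so
$$
\mathbb{P}_{\bb}(x)=\sum_{i=0}^{N-1}\bigl(a(i)+1\bigr)x^i=\mathbb{P}_{\ba}(x)+\sum_{i=0}^{N-1}x^i=\mathbb{P}_{\ba}(x)+\frac{x^N-1}{x-1}.
$$

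The only part needing a little care is part~1), the cyclic shift. Writing $\bb=L^{\tau}(\ba)$ means $b(i)=a(i+\tau)$ with the argument reduced modulo $N$, so $\mathbb{P}_{\bb}(x)=\sum_{i=0}^{N-1}a\bigl((i+\tau)\bmod N\bigr)x^i$. Here the naive single substitution fails to give an exact polynomial identity, because $x^{N-\tau}\mathbb{P}_{\ba}(x)$ has degree up to $2N-\tau-1$. The clean way is to split the sum at $i=N-\tau$: for $0\le i\le N-\tau-1$ set $j=i+\tau$, and for $N-\tau\le i\le N-1$ set $j=i+\tau-N$. Re-indexing the two pieces and collecting powers of $x$ recovers exactly the reduction of $x^{N-\tau}\mathbb{P}_{\ba}(x)$ modulo $x^N-1$. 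I expect this index-splitting to be the main (and essentially only) obstacle; once it is carried out, the three relations assemble immediately, and they are precisely what is needed to write down $\mathbb{P}_{\bu}(x)$ for the interleaved optimal sequences of \cite{Su-Y-F}.
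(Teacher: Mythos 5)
Your proof is correct. The paper itself gives no proof of this lemma---it is quoted directly from \cite{Wang-Du} and \cite{Li-Tang}---and your coefficient-level verification, including the correct observation that part 1) holds only as a congruence modulo $x^N-1$ (which suffices because the sequence polynomial is only ever used through $\gcd(x^N-1,\cdot)$ in Lemma~\ref{lem-LC}), is exactly the standard argument found in those references.
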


Let $\bs$ and $\bv$ be two binary sequences of period $N$. Let $r$ be a positive integer with $\gcd(r,N)=1$.
The sequence $\bv$ is said to be a $r$-decimation of
$\bs$ if $v(t)=s(rt (\bmod N))$ for all $0\leq t<N$. Two sequences
$\bs$ and $\bv$ are said to be equivalent if $\bv$ is a cyclic shift version of the decimation of $\bs$ or its
complement. Otherwise, they are said to be inequivalent.  The following result gives the relationship between the minimal polynomials of two binary equivalent sequences.

\begin{lemma}[Lemma 5, \cite{Wang-Du}]\label{lemma-minimal-relaiton}
Let $\mathbb{M}_{\bs}(x)$ and $\mathbb{M}_{\bv}(x)$ be the minimal polynomials of  two binary sequences
$\bs$  and  $\bv$ of period $N$, respectively. Then we have
\begin{enumerate}
\item[1)] $\mathbb{M}_{\bv}(x)=\mathbb{M}_{\bs}(x)$ if the sequence
$\bv$ can be obtained from $\bs$ by a cyclic shift.

\item[2)] $\deg(\mathbb{M}_{\bv}(x))=\deg(\mathbb{M}_{\bs}(x))$ if the sequence
$\bv$ can be obtained from $\bs$ by a decimation $r$ with $\gcd(r,N)=1$.

\item[3)]If the sequence $\bv$ is a complement of $\bs$, then
\begin{eqnarray*}
\mathbb{M}_{\bv}(x)=\left\{
\begin{array}{ll}
\mathbb{M}_{\bs}(x) (x-1),   &  \mbox{~if~} (x-1)\nmid \mathbb{M}_{\bs}(x),\\
\mathbb{M}_{\bs}(x)/(x-1),   &  \mbox{~if~} (x-1)| \mathbb{M}_{\bs}(x) \mbox{~and~} (x-1)^2\nmid \mathbb{M}_{\bs}(x),\\
\mathbb{M}_{\bs}(x),   &  \mbox{~if~} (x-1)^2| \mathbb{M}_{\bs}(x).
\end{array} \right.
\end{eqnarray*}
\end{enumerate}
\end{lemma}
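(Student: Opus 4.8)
The plan is to reduce all three claims to the gcd formula of Lemma \ref{lem-LC}, feeding in the sequence-polynomial identities of Lemma \ref{lem-Wang}, and to read everything inside the residue ring $\mathcal{R}=\mathbb{F}_2[x]/(x^N-1)$. The guiding remark is that $\mathbb{M}_{\bs}(x)=(x^N-1)/\gcd(x^N-1,\mathbb{P}_{\bs}(x))$ is the monic generator of the annihilator of $\mathbb{P}_{\bs}(x)$ in $\mathcal{R}$, and that $\mathrm{LC}(\bs)$ equals the $\mathbb{F}_2$-dimension of the principal ideal $\mathcal{R}\cdot\mathbb{P}_{\bs}(x)$; any transformation of $\mathcal{R}$ that sends $\mathbb{P}_{\bs}$ to $\mathbb{P}_{\bv}$ therefore controls the comparison. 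For claim 1) this is immediate: Lemma \ref{lem-Wang} 1) gives $\mathbb{P}_{\bv}(x)=x^{N-\tau}\mathbb{P}_{\bs}(x)$, and since $x$ is a unit of $\mathcal{R}$ (so $x^{N-\tau}$ is coprime to $x^N-1$), one has $\gcd(x^N-1,\mathbb{P}_{\bv}(x))=\gcd(x^N-1,\mathbb{P}_{\bs}(x))$, whence $\mathbb{M}_{\bv}(x)=\mathbb{M}_{\bs}(x)$.

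For claim 2), writing $\bar r$ for the inverse of $r$ modulo $N$ and substituting $i=rt\bmod N$ in the definition of $\mathbb{P}_{\bv}$ yields $\mathbb{P}_{\bv}(x)\equiv\mathbb{P}_{\bs}(x^{\bar r})\pmod{x^N-1}$. Because $\gcd(\bar r,N)=1$, the map $x\mapsto x^{\bar r}$ permutes the basis $1,x,\dots,x^{N-1}$ of $\mathcal{R}$ and respects products, hence is an $\mathbb{F}_2$-algebra automorphism of $\mathcal{R}$ carrying $\mathbb{P}_{\bs}$ to $\mathbb{P}_{\bv}$. Such an automorphism preserves $\mathbb{F}_2$-dimensions of ideals, so $\dim_{\mathbb{F}_2}\mathcal{R}\cdot\mathbb{P}_{\bs}=\dim_{\mathbb{F}_2}\mathcal{R}\cdot\mathbb{P}_{\bv}$, that is $\deg\mathbb{M}_{\bv}(x)=\deg\mathbb{M}_{\bs}(x)$. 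It need not fix $\mathbb{M}_{\bs}$ itself, which is why only equality of degrees is asserted.

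For claim 3), Lemma \ref{lem-Wang} 2) gives $\mathbb{P}_{\bv}(x)=\mathbb{P}_{\bs}(x)+S(x)$ with $S(x)=(x^N-1)/(x-1)$. I would first note that $S$ carries the full prime-power of every irreducible factor $q\ne x-1$ of $x^N-1$, so $v_q(\gcd(x^N-1,\mathbb{P}_{\bv}))=v_q(\gcd(x^N-1,\mathbb{P}_{\bs}))$ for all such $q$; thus $\mathbb{M}_{\bv}$ and $\mathbb{M}_{\bs}$ can differ only in the power of $(x-1)$. Writing $x^N-1=(x-1)^{e}T(x)$ with $(x-1)\nmid T$, so that $v_{x-1}(S)=e-1$, it then remains to compare the $(x-1)$-valuations of the two gcd's as the quantity $\mu=v_{x-1}(\mathbb{P}_{\bs})$ runs through $\mu\ge e$, $\mu=e-1$, and $\mu\le e-2$. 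These three ranges correspond respectively to $(x-1)\nmid\mathbb{M}_{\bs}$, to $(x-1)$ exactly dividing $\mathbb{M}_{\bs}$, and to $(x-1)^2\mid\mathbb{M}_{\bs}$, and a short valuation count in each produces the three displayed formulas. I expect the borderline case $\mu=e-1$ to be the only real obstacle: there $\mathbb{P}_{\bs}$ and $S$ share the valuation $e-1$, and one must show the sum gains valuation. This is precisely where the field being $\mathbb{F}_2$ is decisive: writing $\mathbb{P}_{\bs}=(x-1)^{e-1}A$ and $S=(x-1)^{e-1}B$ with $A(1),B(1)\ne0$ forces $A(1)=B(1)=1$, so $(A+B)(1)=0$ and $(x-1)\mid(A+B)$, pushing $v_{x-1}(\mathbb{P}_{\bv})$ up to at least $e$ and giving $(x-1)\nmid\mathbb{M}_{\bv}$.
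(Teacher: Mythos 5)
Your proof is correct, but note that the paper itself offers nothing to compare it against: Lemma \ref{lemma-minimal-relaiton} is imported verbatim from Wang and Du (Lemma 5 of \cite{Wang-Du}) and is never proved in this paper, which only uses it (via the degree gap criterion) to distinguish inequivalent sequences in Remark \ref{remark-2}. On its own merits your argument is complete and well organized around the gcd formula of Lemma \ref{lem-LC}, read in $\mathcal{R}=\mathbb{F}_2[x]/(x^N-1)$: for 1) the observation that $x^{N-\tau}$ is a unit coprime to $x^N-1$ gives equality of gcds, hence of minimal polynomials; for 2) the substitution $x\mapsto x^{\bar r}$ is indeed an algebra automorphism of $\mathcal{R}$ permuting the monomial basis, and since the ideal generated by $\mathbb{P}_{\bs}$ has $\mathbb{F}_2$-dimension $N-\deg\gcd(x^N-1,\mathbb{P}_{\bs})={\rm LC}(\bs)$, the automorphism forces equal degrees while rightly not forcing equal minimal polynomials; for 3) your valuation bookkeeping is sound, since writing $N=2^aM$ with $M$ odd gives $x^N-1=(x^M-1)^{2^a}$ over $\mathbb{F}_2$, so every irreducible $q\neq x-1$ divides $S(x)=(x^N-1)/(x-1)$ to the full multiplicity $e=2^a$ (leaving those gcd valuations untouched), while $v_{x-1}(S)=e-1$, and the trichotomy $\mu\geq e$, $\mu=e-1$, $\mu\leq e-2$ matches exactly the three divisibility conditions on $\mathbb{M}_{\bs}$. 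Your treatment of the borderline case $\mu=e-1$ is the one genuinely delicate point, and your resolution is exactly right: over $\mathbb{F}_2$ the nonzero values $A(1)$ and $B(1)$ must both equal $1$, so the sum gains $(x-1)$-valuation, yielding $\mathbb{M}_{\bv}=\mathbb{M}_{\bs}/(x-1)$; this is also the step where the binary hypothesis in the lemma is actually used, and your proof makes that visible in a way the bare citation cannot.
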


Lemma \ref{lemma-minimal-relaiton} implies that two binary sequences $\bs$ and $\bv$ of the same period
are inequivalent if $|\deg(\mathbb{M}_{\bs}(x))-\deg(\mathbb{M}_{\bv}(x))|\geq 2$. This fact will be used to judge
when the optimal sequences obtained in \cite{Su-Y-F} (see Theorems \ref{thm-Su-1} and \ref{thm-Su-2} in Section \ref{sec-main}) are inequivalent.

\subsection{Ding-Helleseth-Lam sequences}

Let $p=4f+1$ be an odd prime, where $f$ is a positive integer, and let $\theta$ be a generator of the multiplicative group of the field field $\mathbb{Z}_p$, then the cyclotomic classes $D_i$ of order $4$ are defined as $D_i=\{\theta^{i+4j}: 0\leq j\leq f-1\}$ for $0\leq i\leq 3$. Using the cyclotomic classes of order $4$, Ding, Helleseth and Lam constructed serveral classes of optimal binary sequences with period $p$ as follows.

\begin{theorem}[\cite{Ding-H-L}]\label{thm-Ding}
  Let $p=4f+1=x^2+4y^2$ be an odd prime, where $f, x, y$ are positive integers. Then all the sequences of period $p$  with supports $D_0\cup D_1$, $D_1\cup D_2$, $D_2\cup D_3$, $D_0\cup D_3$ respectively are optimal sequences with autocorrelation values $1$ and $-3$ if and only if $f$ is odd and $y=\pm 1$.
\end{theorem}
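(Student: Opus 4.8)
The plan is to translate the statement about autocorrelation into one about the difference function of the support, and then to evaluate that function using cyclotomic numbers of order $4$. Write $C=D_0\cup D_1$ and $d_C(\tau)=|(C+\tau)\cap C|$. By \eqref{eqn-relation}, the optimality requirement $R_{\ba}(\tau)\in\{1,-3\}$ for all $1\le\tau<p$ is equivalent to asking that $d_C(\tau)$ take, for every nonzero $\tau$, one of two consecutive integer values; the combinatorial identity $\sum_{\tau\ne0}d_C(\tau)=|C|(|C|-1)=2f(2f-1)$ then pins these down uniquely as $f-1$ and $f$, each occurring exactly $2f$ times. Before computing anything I would dispose of three of the four cases at once: multiplication by the generator $\theta$ sends $D_i$ to $D_{i+1}$, hence carries $D_0\cup D_1$ successively to $D_1\cup D_2$, $D_2\cup D_3$ and $D_0\cup D_3$; since a multiplier merely permutes $\mathbb{Z}_p^{*}$ it only relabels the values of the difference function, so the four supports share the same autocorrelation spectrum and it suffices to treat $C=D_0\cup D_1$.

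Next I would use the homogeneity of the cyclotomic classes to reduce $d_C$ to cyclotomic numbers. For $\tau\in D_m$, dividing $c_1-c_2=\tau$ by $\tau$ turns a difference of two elements of $C$ into a pair $(u,v)$ with $u-v=1$, $u\in D_{i-m}$ and $v\in D_{j-m}$; consequently $d_C(\tau)$ depends only on the coset index $m$ and equals a sum of four cyclotomic numbers of order $4$, namely $\sum_{i,j\in\{0,1\}}(\,j-m,\ i-m\,)$ with all indices read modulo $4$. Evaluating this for $m=0,1,2,3$ expresses the four possible values of $d_C(\tau)$ as four explicit sums of the $(i,j)$'s, so the whole problem collapses to understanding these sums.

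I would then substitute the classical closed forms for the cyclotomic numbers of order $4$. These are linear in $p$ and in the entries of the representation $p=x^2+4y^2$ (after fixing the sign of $x$ by $x\equiv1\pmod 4$, which also resolves the ambiguity in $y$), and they come in two parity branches according to whether $f$ is even or odd. Inserting them into the four sums yields each value of $d_C(\tau)$ as an explicit expression in $p$, $x$, $y$. Demanding that every one of the four values land in $\{f-1,f\}$ translates into a small system of constraints; I expect the $x$-dependence to cancel and the constraints to force $|y|=1$, while the admissibility of the relevant parity branch forces $f$ to be odd. The converse uses the same formulas in reverse: when $f$ is even, or when $|y|\ne1$, at least one of the four values of $d_C(\tau)$ provably leaves $\{f-1,f\}$, so the corresponding autocorrelation leaves $\{1,-3\}$ and the sequence fails to be optimal.

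The step I expect to be hardest is the cyclotomic-number bookkeeping together with the ``only if'' direction. One must pin down the correct normalization of $x$ and the induced sign of $y$, keep the index shifts modulo $4$ consistent across all four cosets, and select the right parity branch of the order-$4$ formulas, since a single sign slip propagates through all four sums. Proving necessity cleanly is the delicate part: it requires showing that the residual dependence on $x$ and $y$ genuinely obstructs optimality for every $(f,y)$ outside the stated range, rather than cancelling by some numerical accident.
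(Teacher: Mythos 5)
The paper gives no proof of this statement at all: Theorem~\ref{thm-Ding} is imported verbatim, with citation, from \cite{Ding-H-L}, so the relevant benchmark is the original Ding--Helleseth--Lam argument, and your plan is essentially a faithful reconstruction of it (multiplier reduction of the four supports to $C=D_0\cup D_1$, conversion of $d_C(\tau)$ for $\tau\in D_m$ into the sum $\sum_{i,j\in\{0,1\}}(j-m,\,i-m)$ of cyclotomic numbers of order $4$, then the classical closed forms). The steps you only ``expect'' to work do in fact check out. For $f$ odd (equivalently $p\equiv 5\pmod 8$, which forces $y$ odd in $p=x^2+4y^2$ with $x\equiv 1\pmod 4$), the standard tables give $d_C(\tau)=f-\frac{1+y}{2}$ for $\tau\in D_0\cup D_2$ and $d_C(\tau)=f-\frac{1-y}{2}$ for $\tau\in D_1\cup D_3$: the $x$-dependence cancels exactly as you predicted, and both values lie in $\{f-1,f\}$ precisely when $y=\pm 1$. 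For $f$ even, $y$ is even and nonzero, and the four values reduce to $f-1\pm\frac{y}{2}$; since these sum to $2f-2$, both lie in $\{f-1,f\}$ only if $y=0$, which is impossible for a prime $p$, so necessity of ``$f$ odd'' falls out cleanly rather than requiring the residual $(f,y)$-analysis you worried about.

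Two small cautions. First, your target pair $\{f-1,f\}$ is correct, but note that the paper's identity \eqref{eqn-relation} is misstated: the correct relation is $R_{\ba}(\tau)=N-4|C_{\ba}|+4|(C_{\ba}+\tau)\cap C_{\ba}|$, and taken literally \eqref{eqn-relation} would point you at $\{f,f+1\}$, which your counting identity $\sum_{\tau\neq 0}d_C(\tau)=2f(2f-1)$ silently and correctly overrules --- worth making explicit, since as written you invoke \eqref{eqn-relation} and the counting argument together without noting the discrepancy. Second, fixing $x\equiv 1\pmod 4$ does \emph{not} by itself resolve the sign of $y$; that sign depends on the choice of the generator $\theta$ (i.e., on the labeling of $D_1$ versus $D_3$). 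This is harmless here because the two coset values are swapped under $y\mapsto -y$ and the criterion $y=\pm 1$ is sign-symmetric, but the claim as you state it is not quite right.
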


Let $m$ be the order of $2$ modulo $p$ and $\beta$ be a primitive $p$-th root of unity over the finite field $\mathbb{F}_{2^m}$, that is , $\mathbb{F}_{2^m}$ is the splitting field of $x^p-1$. Define
\begin{eqnarray}\label{eq-s(x)}
 \mathbb{S}(x)=\sum_{i\in D_0\cup D_1}x^i, \;\; \mathbb{T}(x)=\sum_{i\in D_1\cup D_2}x^i.
\end{eqnarray}
With the help of the properties of the polynomials $\mathbb{S}(x)$ and $\mathbb{T}(x)$,  the linear complexity of the Ding-Helleseth-Lam sequences was determined in \cite{Ding-H-L}. In the sequel, we need some basic facts about the values of $\mathbb{S}(x)$ and $\mathbb{T}(x)$ at the point $\beta$ used in the proof of Theorem 12 in \cite{Ding-H-L}, which can be easily verified and will play an important role in proving our main results.

\begin{lemma}[\cite{Ding-H-L}]\label{lem-Ding}
With the notation above, we have
$$
\mathbb{S}(\beta^k)=\mathbb{S}(\beta), \mathbb{T}(\beta), \mathbb{S}(\beta)+1, \mathbb{T}(\beta)+1
$$
when $k\in D_0, D_1, D_2, D_3$, respectively.
\end{lemma}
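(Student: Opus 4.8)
The plan is to exploit the multiplicative structure of the cyclotomic classes together with the fact that the ambient field has characteristic $2$. First I would introduce the Gaussian periods $\eta_i=\sum_{j\in D_i}\beta^j$ for $0\le i\le 3$, so that by the definitions in \eqref{eq-s(x)} we have $\mathbb{S}(\beta)=\eta_0+\eta_1$ and $\mathbb{T}(\beta)=\eta_1+\eta_2$. The whole lemma then reduces to evaluating $\mathbb{S}(\beta^k)$ in terms of these four periods.

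The key algebraic input is that each $D_i$ is a coset of the index-$4$ subgroup $D_0$ in $\mathbb{Z}_p^*$, so multiplication by a fixed element $k\in D_s$ permutes the classes according to $k\cdot D_i=D_{(s+i)\bmod 4}$. Applying this with $i\in\{0,1\}$ gives $\mathbb{S}(\beta^k)=\sum_{i\in D_0\cup D_1}\beta^{ki}=\eta_s+\eta_{(s+1)\bmod 4}$ whenever $k\in D_s$. For $s=0$ and $s=1$ this directly yields $\eta_0+\eta_1=\mathbb{S}(\beta)$ and $\eta_1+\eta_2=\mathbb{T}(\beta)$ respectively, disposing of the first two cases at once.

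For $s=2$ and $s=3$ the answers are $\eta_2+\eta_3$ and $\eta_3+\eta_0$, which I would rewrite using the single identity $\eta_0+\eta_1+\eta_2+\eta_3=\sum_{j=1}^{p-1}\beta^j=1$ in $\mathbb{F}_{2^m}$; this follows from $\sum_{j=0}^{p-1}\beta^j=0$ (as $\beta$ is a primitive $p$-th root of unity) and the fact that $-1=1$ in characteristic $2$. Solving for the appropriate pair gives $\eta_2+\eta_3=\eta_0+\eta_1+1=\mathbb{S}(\beta)+1$ and $\eta_0+\eta_3=\eta_1+\eta_2+1=\mathbb{T}(\beta)+1$, which settles the remaining two cases.

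I do not anticipate a serious obstacle, since the computation is essentially bookkeeping once the Gaussian periods are in place. The only point requiring care is the characteristic-$2$ arithmetic that produces the additive constant in the last two cases: one must remember that $-1=1$ in $\mathbb{F}_{2^m}$, so that the full sum of the four periods equals $1$ rather than $-1$, and it is exactly this constant that separates the values of $\mathbb{S}(\beta^k)$ on $D_2\cup D_3$ from its values on $D_0\cup D_1$.
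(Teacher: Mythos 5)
Your proof is correct, and it is precisely the standard verification behind this fact: the paper itself gives no proof, citing \cite{Ding-H-L} and calling the lemma ``easily verified,'' and your argument --- the coset action $k\cdot D_i=D_{(s+i)\bmod 4}$ for $k\in D_s$, giving $\mathbb{S}(\beta^k)=\eta_s+\eta_{(s+1)\bmod 4}$, combined with $\eta_0+\eta_1+\eta_2+\eta_3=\sum_{j=1}^{p-1}\beta^j=1$ in characteristic $2$ --- is exactly the computation used in the proof of Theorem 12 of that reference. You correctly flagged the one point of care, namely that $-1=1$ in $\mathbb{F}_{2^m}$ is what produces the additive constant $1$ in the $D_2$ and $D_3$ cases.
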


\section{The linear complexity of the optimal sequences obtained from \eqref{eq-construction}}\label{sec-main}

From now on, we adopt the following notation unless otherwise stated:
\begin{itemize}
\item $p=4f+1$ is an odd prime with $f$ being odd.
\item $d$ is a positive integer satisfying $4d\equiv 1\pmod{p}$.
\item $\theta$ is a generator of the multiplicative group of $\mathbb{Z}_p$.
\item $D_i=\{\theta^{i+4j}: 0\leq j\leq f-1\}$ for $0\leq i\leq 3$ are the cyclotomic classes of order $4$.
\item ${\bs}_1,{\bs}_2,{\bs}_2, {\bs}_4$ are the Ding-Helleseth-Lam sequences of period $p$ with the supports $D_0\cup D_1$, $D_0\cup D_3$, $D_1\cup D_2$, $D_2\cup D_3$, respectively.
\item $\mathbb{S}(x)$ and $\mathbb{T}(x)$ are two polynomials given in (\ref{eq-s(x)}).
\end{itemize}

The following result was proved in \cite{Su-Y-F}.

 \begin{theorem}[Theorem 1, \cite{Su-Y-F}]\label{thm-Su-1}
  Let $\bb=(b(0),b(1),b(2),b(3))$ be a binary sequence with $b(0)=b(2)$ and $b(1)=b(3)$, and $({\ba}_0,{\ba}_1,{\ba}_2,{\ba}_3)=({\bs}_3,{\bs}_2,{\bs}_1,{\bs}_1)$. Then the binary sequence $\bu$ constructed from \eqref{eq-construction} is optimal with respect to the autocorrelation magnitude, i.e., $R_{\bu}(\tau)\in\{0,\pm4\}$ for all $0<\tau<4p$.
\end{theorem}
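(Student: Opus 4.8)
The plan is to reduce the autocorrelation of the interleaved sequence $\bu$ to a handful of auto- and cross-correlations of the underlying Ding-Helleseth-Lam sequences, and then to evaluate those through cyclotomic numbers of order $4$. Throughout, write $R_{\ba,\bb}(\ell)=\sum_{t}(-1)^{a(t)+b(t+\ell)}$ for the periodic cross-correlation of two period-$p$ sequences, and set $\mathbf{c}_i=L^{id}(\ba_i)+b(i)$ so that $\bu=I(\mathbf{c}_0,\mathbf{c}_1,\mathbf{c}_2,\mathbf{c}_3)$.

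First I would apply the standard autocorrelation formula for an interleaved sequence. Splitting a shift as $\tau=4\tau_1+\tau_2$ with $0\le\tau_2\le 3$ and tracking the carry $\varepsilon_i\in\{0,1\}$ that records whether $i+\tau_2\ge 4$, one obtains
$$R_{\bu}(\tau)=\sum_{i=0}^{3}R_{\mathbf{c}_i,\mathbf{c}_{j}}(\tau_1+\varepsilon_i),\qquad j=(i+\tau_2)\bmod 4 .$$
Since $\mathbf{c}_i=L^{id}(\ba_i)+b(i)$, a change of summation variable turns each term into $(-1)^{b(i)+b(j)}R_{\ba_i,\ba_j}(\tau_1+\varepsilon_i+(j-i)d)$. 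Here the choice $4d\equiv 1\pmod p$ is decisive: in the no-carry case the shift argument is $\tau_1+\tau_2 d$, while in the carry case it is $\tau_1+1+(\tau_2-4)d\equiv \tau_1+\tau_2 d\pmod p$, so \emph{all four} terms are evaluated at the single point $\sigma:=\tau_1+\tau_2 d$. Moreover, the hypotheses $b(0)=b(2)$ and $b(1)=b(3)$ make the four sign factors $(-1)^{b(i)+b(j)}$ coincide for each fixed $\tau_2$ (equal to $1$ when $\tau_2\in\{0,2\}$ and to a common $\epsilon=(-1)^{b(0)+b(1)}$ when $\tau_2\in\{1,3\}$). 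Substituting $(\ba_0,\ba_1,\ba_2,\ba_3)=(\bs_3,\bs_2,\bs_1,\bs_1)$, this reduces $R_{\bu}(\tau)$ to four explicit expressions, one per residue $\tau_2$; for instance $\tau_2=0$ gives $R_{\bs_3,\bs_3}(\sigma)+R_{\bs_2,\bs_2}(\sigma)+2R_{\bs_1,\bs_1}(\sigma)$, and the other three are signed sums of the cross-correlations $R_{\bs_i,\bs_j}(\sigma)$ with $i\ne j$.

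Next I would evaluate every auto- and cross-correlation appearing above. Using the support description of each $\bs_i$ as a union of two cyclotomic classes together with the identity $R_{\bs_i,\bs_j}(\sigma)=p-2\,|\{t:s_i(t)\ne s_j(t+\sigma)\}|$, each correlation becomes a signed combination of the counts $|D_a\cap(D_b+\sigma)|$, which are precisely the cyclotomic numbers of order $4$. The hypothesis of Theorem~\ref{thm-Ding} ($f$ odd and $y=\pm 1$, equivalently $p\equiv 5\pmod 8$ with $p=x^2+4$) pins down these cyclotomic numbers to explicit constants, so each correlation becomes a function of the cyclotomic class of $\sigma$ alone, taking only a few values. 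In particular each diagonal term $R_{\bs_i,\bs_i}(\sigma)$ lies in $\{1,-3\}$ for $\sigma\ne 0$, recovering the optimal two-level autocorrelation of the Ding-Helleseth-Lam sequences.

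Finally I would assemble the four expressions and verify, class by class, that $R_{\bu}(\tau)\in\{0,\pm 4\}$ for every $\tau\not\equiv 0$. As $\tau$ runs over $1,\dots,4p-1$ the parameter $\sigma=\tau_1+\tau_2 d$ runs over all of $\mathbb{Z}_p$ for each fixed $\tau_2$, so the verification is a finite check over the (at most four) cyclotomic classes of $\sigma$, plus the boundary value $\sigma=0$ that can occur when $\tau_2\ne 0$ and is handled directly from the supports. I expect the genuine obstacle to be precisely this last bookkeeping: one must show that the particular combinations forced by $(\bs_3,\bs_2,\bs_1,\bs_1)$ never produce the out-of-range totals $-8$ or $-12$ (for $\tau_2=0$, this amounts to showing that $R_{\bs_1,\bs_1}(\sigma)=-3$ forces $R_{\bs_2,\bs_2}(\sigma)=R_{\bs_3,\bs_3}(\sigma)=1$), and to establish the analogous cancellations among the off-diagonal cross-correlations for $\tau_2\in\{1,2,3\}$. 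These coincidences are exactly what the cyclotomic constants under $f$ odd, $y=\pm1$ are designed to guarantee.
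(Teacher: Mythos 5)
First, a point of order: the paper you were given contains \emph{no proof} of this statement. Theorem~\ref{thm-Su-1} is quoted verbatim (as ``Theorem 1, \cite{Su-Y-F}'') from the source paper by Su, Yang and Fan, and the present paper's own contribution begins only afterwards, with the linear complexity of $\bu$. So there is no in-paper proof to compare against; your proposal can only be judged on its own merits and against the argument in \cite{Su-Y-F}, which it reconstructs along the standard lines for interleaved constructions.

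On those merits, your reduction is correct and is indeed the right mechanism: the split $\tau=4\tau_1+\tau_2$ with carry bookkeeping, the observation that $4d\equiv 1\pmod p$ collapses the carry and no-carry shift arguments to the single value $\sigma=\tau_1+\tau_2 d\bmod p$, and the role of $b(0)=b(2)$, $b(1)=b(3)$ in equalizing the four sign factors for each fixed $\tau_2$ all check out (e.g.\ for $\tau_2=0$ one gets $R_{{\bs}_3}(\sigma)+R_{{\bs}_2}(\sigma)+2R_{{\bs}_1}(\sigma)$ as you state, and your arithmetic that $R_{{\bs}_1}(\sigma)=-3$ must force $R_{{\bs}_2}(\sigma)=R_{{\bs}_3}(\sigma)=1$ to avoid the totals $-8$ and $-12$ is exactly right). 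The genuine shortfall is that the entire computational core is deferred: you never evaluate the auto- and cross-correlations $R_{{\bs}_i,{\bs}_j}(\sigma)$, and the decisive coincidences are introduced with ``I expect'' rather than established. Theorem~\ref{thm-Ding} constrains only the diagonal terms to $\{1,-3\}$; it says nothing about \emph{which} cyclotomic classes of $\sigma$ produce $-3$ for each ${\bs}_i$, nor about the off-diagonal terms governing $\tau_2\in\{1,2,3\}$, and these are precisely where the specific quadruple $({\bs}_3,{\bs}_2,{\bs}_1,{\bs}_1)$ (as opposed to an arbitrary choice of four Ding--Helleseth--Lam sequences) enters. To close the argument you would need to carry out the cyclotomic-number computation of order $4$ under $p\equiv 5\pmod 8$, $p=x^2+4$ (or cite the correlation evaluations from \cite{Ding-H-L} and \cite{Su-Y-F}), verify the class-by-class table for all four residues $\tau_2$, and handle the boundary case $\sigma=0$ with $\tau_2\neq 0$ explicitly. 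As it stands, the proposal is a correct and well-aimed plan whose crux is accurately identified but not proved.
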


 In what follows, we determine the linear complexity of the optimal sequences in Theorem \ref{thm-Su-1} with the help of Lemmas \ref{lem-LC} and \ref{lem-Ding}. We always assume that $\mathbb{F}_{2^m}$ is the splitting field of $x^p-1$ and $\beta$ is a primitive $p$-th root of $x^p-1$ in $\mathbb{F}_{2^m}$. Then the set $\{\beta^i: i=0,1,2,\cdots, p-1\}$ of roots of $x^p-1$ is a cyclic group of oder $p$ with respect to the multiplication in $\mathbb{F}_{2^m}$. Let $\bu$ be the sequence obtained in Theorem \ref{thm-Su-1}  and $\mathbb{P}_{\bu}(x)$ be its sequence polynomial. It then follows from Lemma \ref{lem-LC} that
  \begin{eqnarray}\label{eq-LCu}
 {\rm LC}(\bu)=4p-\deg(\gcd(x^{4p}-1, \mathbb{P}_{\bu}(x)))=4p-\sum_{i=0}^{p-1}N_i.
\end{eqnarray}
where $N_i=\min\{k_i,4\}$ and $k_i$ denotes the multiplicity of $\beta^i$ as a root of $\mathbb{P}_{\bu}(x)$.

The following lemmas will be needed to prove the main result of this paper in the sequel.

\begin{lemma}\label{lem-polynomial-s-i}
Let symbols be the same as before. Then we have
 \begin{eqnarray*}\label{eq-si(x)}
 \mathbb{P}_{{\bs}_1}(x)={\mathbb{S}(x)}, \; \mathbb{P}_{{\bs}_2}(x)={\mathbb{S}(x^{\theta^3})};  \; \mathbb{P}_{{\bs}_3}(x)={\mathbb{S}(x^{\theta})};\; \mbox{~and~}\mathbb{P}_{{\bs}_4}(x)={\mathbb{S}(x^{\theta^2})}.
\end{eqnarray*}
\end{lemma}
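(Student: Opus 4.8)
The plan is to exploit the elementary fact that the sequence polynomial of a characteristic sequence is just the indicator sum of monomials over its support, and then to reduce the last three identities to the first by a change of variable that cyclically permutes the cyclotomic classes.

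First I would record the base case. Since $\bs_1$ is, by definition, the characteristic sequence of $D_0\cup D_1$, its sequence polynomial is $\mathbb{P}_{\bs_1}(x)=\sum_{i\in D_0\cup D_1}x^i$, which is exactly $\mathbb{S}(x)$ in the notation of \eqref{eq-s(x)}. This settles the first claim with no computation, and the same principle gives $\mathbb{P}_{\bs_2}(x)=\sum_{i\in D_0\cup D_3}x^i$, $\mathbb{P}_{\bs_3}(x)=\sum_{i\in D_1\cup D_2}x^i$, $\mathbb{P}_{\bs_4}(x)=\sum_{i\in D_2\cup D_3}x^i$, so everything reduces to rewriting these support sums in terms of $\mathbb{S}$.

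The engine for the remaining three is the observation that multiplication by a power of $\theta$ shifts the index of a cyclotomic class: from $D_i=\{\theta^{i+4j}:0\le j\le f-1\}$ one reads off $\theta^a D_i=D_{(i+a)\bmod 4}$ for every integer $a$. Consequently the substitution $x\mapsto x^{\theta^a}$ replaces each exponent $i$ by $\theta^a i$, so that $\mathbb{S}(x^{\theta^a})=\sum_{i\in D_0\cup D_1}x^{\theta^a i}=\sum_{k\in D_a\cup D_{(a+1)\bmod 4}}x^k$. Taking $a=3,1,2$ turns the support $D_0\cup D_1$ into $D_3\cup D_0=D_0\cup D_3$, $D_1\cup D_2$, and $D_2\cup D_3$ respectively, matching the supports of $\bs_2,\bs_3,\bs_4$; hence $\mathbb{P}_{\bs_2}(x)=\mathbb{S}(x^{\theta^3})$, $\mathbb{P}_{\bs_3}(x)=\mathbb{S}(x^{\theta})$, and $\mathbb{P}_{\bs_4}(x)=\mathbb{S}(x^{\theta^2})$.

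No step here is genuinely hard; the only point deserving care is the reduction of exponents modulo $p$. The integers $\theta^a i$ produced by the substitution may exceed $p-1$, but the identities are to be read modulo $x^p-1$ (equivalently, after evaluation at the primitive $p$-th roots $\beta^j$, which is all that \eqref{eq-LCu} ever uses). Under this reduction $\theta^a i\bmod p$ again lies in $\mathbb{Z}_p^{*}$, and the class-shift identity $\theta^a D_i=D_{(i+a)\bmod 4}$ holds as an equality of subsets of $\mathbb{Z}_p^{*}$, so the bookkeeping of exponents is exact and the four equalities follow.
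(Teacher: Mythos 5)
Your proposal is correct and matches the paper's own proof, which rests on exactly the same one-line identity $\sum_{i\in D_j\cup D_{j+1}}x^i=\sum_{i\in D_0\cup D_1}x^{\theta^j\cdot i}=\mathbb{S}(x^{\theta^j})$ obtained from the class-shift relation $\theta^a D_i=D_{(i+a)\bmod 4}$. Your explicit remark that the equalities are to be read modulo $x^p-1$ (since the substituted exponents exceed $p-1$) is a point the paper leaves implicit, but it changes nothing of substance.
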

\begin{proof}
The conclusion follows from the definitions of ${\bs}_1,{\bs}_2,{\bs}_3,{\bs}_4$, and the fact that
 \begin{eqnarray*}
\sum_{i\in D_j\cup D_{j+1}}x^i=\sum_{i\in D_0\cup D_1}x^{\theta^j \cdot i}=\mathbb{S}(x^{\theta^j})
\end{eqnarray*}
for  any $0\leq j\leq 3$.
\end{proof}

\begin{lemma}\label{lem-polynomial-u}
For the sequence $\bu$ in Theorem \ref{thm-Su-1}, we have
\begin{eqnarray}\label{eq-su(x)}
 \mathbb{P}_{\bu}(x)&=&\mathbb{S}(x^{4\theta})+x^p\mathbb{S}(x^{4\theta^3})+x^{2p}\mathbb{S}(x^4)+x^{3p}\mathbb{S}(x^4)+\mathbb{P}_{\bb}(x)\cdot\frac{x^{4p}-1}{x^4-1}.
\end{eqnarray}
\end{lemma}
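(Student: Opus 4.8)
The plan is to compute $\mathbb{P}_{\bu}(x)$ directly from the interleaved structure using part 3) of Lemma \ref{lem-Wang}, and then substitute the sequence polynomials of the constituent Ding-Helleseth-Lam sequences computed in Lemma \ref{lem-polynomial-s-i}, keeping careful track of the cyclic shifts and complement constant term $\bb$. By the construction \eqref{eq-construction} with $({\ba}_0,{\ba}_1,{\ba}_2,{\ba}_3)=({\bs}_3,{\bs}_2,{\bs}_1,{\bs}_1)$, the four interleaved components are $\mathbf{s}_3+b(0)$, $L^d(\mathbf{s}_2)+b(1)$, $L^{2d}(\mathbf{s}_1)+b(2)$ and $L^{3d}(\mathbf{s}_1)+b(3)$. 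So first I would write $\mathbb{P}_{\bu}(x)=\sum_{j=0}^{3} x^{j}\,\mathbb{P}_{{\ba}_j+b(j)}(x^4)$ and then expand each factor.

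\textbf{Handling the shifts and complements.} For each component I would apply parts 1) and 2) of Lemma \ref{lem-Wang}. For the shift $L^{d}(\mathbf{s}_2)$ the polynomial picks up a factor $x^{p-d}$ in the variable before the $x^4$ substitution becomes $x^{4(p-d)}$; the key simplification is that $4d\equiv 1\pmod p$, so $4(p-d)\equiv -1\equiv p-1\pmod{p}$, which combined with $x^{4p}=(x^4)^p$ lets the shift factors on the $\mathbf{s}_1,\mathbf{s}_2$ terms collapse into the clean exponents $4\theta$, $4\theta^3$, $4$, $4$ shown in \eqref{eq-su(x)} once Lemma \ref{lem-polynomial-s-i} is used (recalling $\mathbb{P}_{{\bs}_1}=\mathbb{S}(x)$, $\mathbb{P}_{{\bs}_2}=\mathbb{S}(x^{\theta^3})$, $\mathbb{P}_{{\bs}_3}=\mathbb{S}(x^{\theta})$). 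The complement contributions are collected separately: by part 2) of Lemma \ref{lem-Wang} each complemented component adds $b(j)\cdot\frac{x^{4p}-1}{x^4-1}$ (in the $x^4$ variable, of period $p$), and summing $\sum_j b(j)x^j \cdot \frac{x^{4p}-1}{x^4-1}$ over the four positions yields exactly the term $\mathbb{P}_{\bb}(x)\cdot\frac{x^{4p}-1}{x^4-1}$ in \eqref{eq-su(x)}.

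The main obstacle I expect is bookkeeping rather than conceptual difficulty: correctly tracking how the shift exponents $p-d$, $p-2d$, $p-3d$ transform under the $x\mapsto x^4$ substitution and reduce modulo the implicit period, and verifying that the resulting powers $x^{0},x^{p},x^{2p},x^{3p}$ align precisely with the stated prefactors. The reduction $4d\equiv1\pmod p$ must be used consistently, and I would double-check the degree of each monomial prefactor against the period-$p$ normalization so that no stray factor of $x^{4p}-1$ is dropped. Once the shift arithmetic is pinned down, the substitution of Lemma \ref{lem-polynomial-s-i} and the telescoping of the complement terms are routine, and \eqref{eq-su(x)} follows.
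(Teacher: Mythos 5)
Your proposal is correct and takes essentially the same route as the paper's proof: decompose via part 3) of Lemma \ref{lem-Wang}, handle the shifts and complements with parts 1)--2) together with Lemma \ref{lem-polynomial-s-i}, and use $4d\equiv 1\pmod{p}$ so that the shift prefactors collapse to $x^0,x^p,x^{2p},x^{3p}$ and the $b(j)$ terms sum to $\mathbb{P}_{\bb}(x)\cdot\frac{x^{4p}-1}{x^4-1}$. The only difference is cosmetic: the paper pins down explicit representatives $d\equiv 3f+1$, $2d\equiv 2f+1$, $3d\equiv f+1\pmod{p}$ (giving shift factors $x^{4f}$, $x^{8f}$, $x^{12f}$ after the $x\mapsto x^4$ substitution), whereas you carry out the same bookkeeping by reducing the $(x^4)$-exponents modulo $p$ via $(x^4)^p=x^{4p}\equiv 1$ --- just make sure, as you note, to reduce the exponent of $x^4$ modulo $p$ (not the exponent of $x$ itself), since e.g.\ $x^2\cdot x^{8f}=x^{2p}\not\equiv x^{p}\pmod{x^{4p}-1}$.
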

\begin{proof}
Observe that $p=4f+1$ and $4d\equiv 1\pmod{p}$ lead to
$$
d\equiv -f\equiv p-f\equiv 3f+1\pmod{p}
$$
which further implies that
$$
2d\equiv 6f+2 \equiv 2f+1\pmod{p},
$$
and
$$
3d\equiv 9f+3\equiv f+1\pmod{p}.
$$
This together with  \eqref{eq-construction} implies that
\begin{eqnarray*}
\bu&=&I({\bs}_3+b(0), L^{3f+1}({\bs}_2)+b(1), L^{2f+1}({\bs}_1)+b(2), L^{f+1}({\bs}_1)+b(3)).
\end{eqnarray*}
According to Lemmas \ref{lem-polynomial-s-i} and \ref{lem-Wang}, the sequence polynomials of the following sequences
$$
{\bs}_3+b(0), L^{3f+1}({\bs}_2)+b(1), L^{2f+1}({\bs}_1)+b(2), L^{f+1}({\bs}_1)+b(3)
$$
are respectively given by
$$
{\mathbb{S}(x^{\theta})}+b(0)\cdot\frac{x^p-1}{x-1},
$$
$$
{x^{f}\mathbb{S}(x^{\theta^3})}+b(1)\cdot\frac{x^p-1}{x-1},
$$
$$
{x^{2f}\mathbb{S}(x)}+b(2)\cdot\frac{x^p-1}{x-1},
$$
and
$$
{x^{3f}\mathbb{S}(x)}+b(3)\cdot\frac{x^p-1}{x-1}.
$$
It follows from Lemma \ref{lem-Wang} again that
{\small \begin{eqnarray*}
 \mathbb{P}_{\bu}(x)&=&\left({\mathbb{S}(x^{4\theta})}+b(0)\cdot\frac{x^{4p}-1}{x^4-1}\right)+x\cdot\left({x^{4f}\mathbb{S}(x^{4\theta^3})}+b(1)\cdot\frac{x^{4p}-1}{x^4-1}\right)+   \nonumber
         \\ && x^2\cdot\left({x^{8f}\mathbb{S}(x^4)}+b(2)\cdot\frac{x^{4p}-1}{x^4-1}\right)+x^3\cdot\left({x^{12f}\mathbb{S}(x^4)}+b(3)\cdot\frac{x^{4p}-1}{x^4-1}\right)  \nonumber
        \\ &=&\mathbb{S}(x^{4\theta})+x^p\mathbb{S}(x^{4\theta^3})+x^{2p}\mathbb{S}(x^4)+x^{3p}\mathbb{S}(x^4)+ \mathbb{P}_{\bb}(x)\cdot\frac{x^{4p}-1}{x^4-1}.
\end{eqnarray*}}
This completes the proof of this lemma.
\end{proof}

According to \eqref{eq-LCu},  to determine the linear complexity of the sequence $\bu$, it suffices to determine  $N_i$ for each $0\leq i\leq p-1$. This can be done based on  Lemma \ref{lem-polynomial-u}. Specifically, we have the following results.

\begin{lemma}\label{lem-root-1}
$N_i=0$ for each $1\leq i\leq p-1$.
\end{lemma}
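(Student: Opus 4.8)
The plan is to show directly that $\beta^i$ is not a root of $\mathbb{P}_{\bu}(x)$ for every $1\le i\le p-1$; since $k_i$ is the multiplicity of $\beta^i$ as a root of $\mathbb{P}_{\bu}(x)$, establishing $\mathbb{P}_{\bu}(\beta^i)\ne 0$ forces $k_i=0$ and hence $N_i=\min\{k_i,4\}=0$. First I would substitute $x=\beta^i$ into the closed form for $\mathbb{P}_{\bu}(x)$ furnished by Lemma \ref{lem-polynomial-u}, exploiting the fact that $\beta$ is a primitive $p$-th root of unity, so that $\beta^{p}=1$ and therefore $x^{p}=x^{2p}=x^{3p}=1$ at $x=\beta^i$. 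This immediately collapses the coefficients of the four leading terms.

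Next I would dispose of the two easy contributions. The tail term $\mathbb{P}_{\bb}(x)\cdot\frac{x^{4p}-1}{x^4-1}$ is, as a polynomial, $\mathbb{P}_{\bb}(x)\sum_{j=0}^{p-1}x^{4j}$; evaluating the geometric sum at $x=\beta^i$ gives $\sum_{j=0}^{p-1}\beta^{4ij}$, which vanishes because $\beta^{4i}\ne 1$ (here I use that $p$ is odd, so $\gcd(4,p)=1$ and $p\nmid 4i$ for $1\le i\le p-1$). The two middle terms $x^{2p}\mathbb{S}(x^4)$ and $x^{3p}\mathbb{S}(x^4)$ reduce to $\mathbb{S}(\beta^{4i})+\mathbb{S}(\beta^{4i})$, which cancels since we are working in characteristic $2$. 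What remains is simply $\mathbb{P}_{\bu}(\beta^i)=\mathbb{S}(\beta^{4i\theta})+\mathbb{S}(\beta^{4i\theta^3})$.

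The decisive step is to evaluate these two surviving terms via Lemma \ref{lem-Ding}. For $1\le i\le p-1$ the exponent $4i\theta$ is a nonzero residue modulo $p$, so it lies in exactly one cyclotomic class $D_k$; because $4i\theta^3=(4i\theta)\,\theta^2$, the exponent $4i\theta^3$ then lies in $D_{k+2}$ (indices modulo $4$). I would then read off from Lemma \ref{lem-Ding} that the two relevant values of $\mathbb{S}$ differ by exactly $1$: the pair of classes $\{D_0,D_2\}$ yields the values $\{\mathbb{S}(\beta),\mathbb{S}(\beta)+1\}$, and the pair $\{D_1,D_3\}$ yields $\{\mathbb{T}(\beta),\mathbb{T}(\beta)+1\}$. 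In either case the sum of the two surviving terms equals $1$, so $\mathbb{P}_{\bu}(\beta^i)=1\ne 0$ and $N_i=0$, as claimed.

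I expect the main obstacle to be the bookkeeping of cyclotomic classes under multiplication by $\theta^2$ together with the careful application of Lemma \ref{lem-Ding}; everything else (the vanishing geometric sum and the characteristic-$2$ cancellation of the two $\mathbb{S}(x^4)$ terms) is routine once $x=\beta^i$ has been substituted. One should also note explicitly that $4i\theta\not\equiv 0\pmod p$, so that both surviving exponents genuinely land in cyclotomic classes and Lemma \ref{lem-Ding} applies to each term.
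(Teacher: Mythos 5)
Your proof is correct, and its skeleton matches the paper's: substitute $x=\beta^i$ into \eqref{eq-su(x)} from Lemma \ref{lem-polynomial-u}, note the $\mathbb{P}_{\bb}(x)\cdot\frac{x^{4p}-1}{x^4-1}$ tail vanishes at $\beta^i$ (the geometric sum in $\beta^{4i}\neq 1$), cancel the two $\mathbb{S}(x^4)$ terms in characteristic $2$, and evaluate the surviving $\mathbb{S}(\beta^{4i\theta})+\mathbb{S}(\beta^{4i\theta^3})$ via Lemma \ref{lem-Ding}. Where you genuinely diverge is in that last evaluation. The paper pins down the actual cyclotomic classes of the constants: it uses the standing hypothesis that $f$ is odd to compute the Legendre symbol $\left(\tfrac{2}{p}\right)=(-1)^{(p^2-1)/8}=-1$, concluding $2\in D_1\cup D_3$, hence $4\in D_2$, $4\theta\in D_3$, $4\theta^3\in D_1$, and then splits into the cases $i\in D_0\cup D_2$ and $i\in D_1\cup D_3$. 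You instead observe that $4i\theta^3=(4i\theta)\theta^2$ with $\theta^2\in D_2$, so if $4i\theta\in D_k$ then $4i\theta^3\in D_{k+2}$ (indices mod $4$), and Lemma \ref{lem-Ding} assigns to classes two apart values differing by exactly $1$ (either $\{\mathbb{S}(\beta),\mathbb{S}(\beta)+1\}$ or $\{\mathbb{T}(\beta),\mathbb{T}(\beta)+1\}$), so the sum is $1$ in all four cases at once. Your route is cleaner and strictly more general: it requires only $p\equiv 1\pmod 4$, whereas the paper's quadratic-character computation invokes $f$ odd (i.e., $p\equiv 5\pmod 8$) --- a hypothesis needed elsewhere for the optimality of the Ding-Helleseth-Lam sequences, but, as your argument shows, not for this lemma. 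The only thing the paper's version buys is the explicit location of $4\theta$ and $4\theta^3$ in $D_3$ and $D_1$, which is not used again.
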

\begin{proof}
By \eqref{eq-su(x)} in Lemma \ref{lem-polynomial-u},  $\mathbb{P}_{\bu}(\beta^i)=0$ if and only if $\mathbb{S}(\beta^{4i\theta})+\mathbb{S}(\beta^{4i\theta^3})=0$ due to $\beta^p=1$ and $\beta\ne 1$.  The fact $p=4f+1$ with $f$ being odd implies that $2$ is a non-square element in  $\mathbb{Z}_p$ since the Legendre symbol $(\frac{2}{p})=(-1)^{(p^2-1)/8}=-1$. This means  $2\in D_1\cup D_3$ and then $4\in D_2$. Thus, we have $4\theta\in D_3$ and $4\theta^3\in D_1$. Then, by Lemma \ref{lem-Ding} we have
$$
\mathbb{S}(\beta^{4i\theta})+\mathbb{S}(\beta^{4i\theta^3})=\mathbb{T}(\beta)+\mathbb{T}(\beta)+1=1,
$$
if  $i\in D_0\cup D_2$, and
$$
\mathbb{S}(\beta^{4i\theta})+\mathbb{S}(\beta^{4i\theta^3})=\mathbb{S}(\beta)+\mathbb{S}(\beta)+1=1,
$$
if $i\in D_1\cup D_3$. That is, $\mathbb{S}(\beta^{4i\theta})+\mathbb{S}(\beta^{4i\theta^3})\not=0$ for any $1\leq i\leq p-1$.
This means that $\beta^i$ cannot be a root of  $\mathbb{P}_{\bu}(x)$ for each $1\leq i\leq p-1$, which finishes the proof of this lemma.
\end{proof}

\begin{lemma}\label{lem-root-2}
Let symbols be the same as before. Then we have
\begin{enumerate}
\item [1)] $\gcd(\mathbb{P}_{\bu}(x),x^{4}-1)=x^4-1$ and $N_0=4$ if $\bb=(0,0,0,0)$;
\item [2)] $\gcd(\mathbb{P}_{\bu}(x),x^{4}-1)=x^3+x^2+x+1$ and $N_0=3$ if $\bb=(1,1,1,1)$; and
\item [3)] $\gcd(\mathbb{P}_{\bu}(x),x^{4}-1)=x^2-1$ and $N_0=2$ if $\bb=(1,0,1,0)$ or $\bb=(0,1,0,1)$.
\end{enumerate}
\end{lemma}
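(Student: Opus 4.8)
The plan is to evaluate $\mathbb{P}_{\bu}(x)$ at $x=1$ (which is $\beta^0$) under each of the three choices of $\bb$, and to determine the multiplicity $N_0=\min\{k_0,4\}$ of the root $x=1$ directly. The key simplification, already visible in \eqref{eq-su(x)}, is that setting $x=1$ kills the polynomial identity structure and reduces each $\mathbb{S}(x^{4\theta^j})$ term to $\mathbb{S}(1)=|D_0\cup D_1|=2f$, while the last term $\mathbb{P}_{\bb}(x)\cdot\frac{x^{4p}-1}{x^4-1}$ requires care since $\frac{x^{4p}-1}{x^4-1}\big|_{x=1}=p$. First I would compute the constant term $\mathbb{P}_{\bu}(1)$ modulo $2$: the four $\mathbb{S}$-contributions sum to $4\cdot 2f=8f\equiv 0$, and the $\bb$-term contributes $\mathbb{P}_{\bb}(1)\cdot p = (b(0)+b(1)+b(2)+b(3))\cdot p$, which is odd exactly when an odd number of the $b(j)$ equal $1$. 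This immediately separates the cases: for $\bb=(0,0,0,0)$ and $\bb=(1,1,1,1)$ the Hamming weight is even, so $x=1$ is a root; for $\bb=(1,0,1,0)$ or $(0,1,0,1)$ the weight is also even, so again $x=1$ is a root. Hence in all three listed cases $(x-1)\mid \mathbb{P}_{\bu}(x)$, and the task becomes pinning down the higher multiplicity.

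The cleaner route, rather than differentiating, is to compute $\gcd(\mathbb{P}_{\bu}(x),x^4-1)$ explicitly by reducing $\mathbb{P}_{\bu}(x)$ modulo $x^4-1$. Since over $\mathbb{F}_2$ we have $x^4-1=(x-1)^4$ (as $x^4-1=(x^2-1)^2=(x-1)^4$ in characteristic $2$), the multiplicity $N_0$ is just the valuation of $\mathbb{P}_{\bu}(x)$ at $x=1$, capped at $4$. So I would substitute $x=1+z$ and expand $\mathbb{P}_{\bu}(1+z)$ as a power series in $z$ up to the $z^3$ term, reading off the first nonvanishing coefficient. For the $\mathbb{S}$-terms I would use that $\mathbb{S}(x^{4\theta^j})=\sum_{i\in D_0\cup D_1}x^{4\theta^j i}$, so its expansion in $z$ has coefficients $\binom{4\theta^j i}{k}\pmod 2$ summed over the exponents; the factors $x^p,x^{2p},x^{3p}$ shift these expansions and must be tracked. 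For the $\bb$-term I would use the exact expansion $\frac{x^{4p}-1}{x^4-1}=\sum_{t=0}^{p-1}x^{4t}$, whose value and low-order $z$-derivatives at $x=1$ are straightforward binomial sums modulo $2$.

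The main obstacle will be the bookkeeping of these binomial-coefficient sums modulo $2$ for the $\mathbb{S}$-terms: I expect heavy reliance on Lucas' theorem to evaluate $\binom{4\theta^j i}{k}\pmod 2$ for $k=1,2,3$ and to sum them over the cyclotomic class $D_0\cup D_1$. The three asserted values $\gcd=x^4-1,\ x^3+x^2+x+1=(x-1)^3,\ x^2-1=(x-1)^2$ correspond precisely to $N_0=4,3,2$, i.e.\ to the valuation of $\mathbb{P}_{\bu}(1+z)$ being $\geq 4$, exactly $3$, and exactly $2$ respectively. Thus the heart of the proof is showing that the $z^2$-coefficient vanishes in cases (1) and (2) but not in case (3), and that the $z^3$-coefficient vanishes in case (1) but not in case (2). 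I would organize this by isolating the contribution of the $\bb$-term, whose $z$-expansion depends only on $\mathbb{P}_{\bb}(x)=\sum_j b(j)x^j$ and the fixed geometric factor, and showing the $\mathbb{S}$-terms contribute a common even amount to each low-order coefficient so that the parity is governed entirely by $\bb$. The key auxiliary computation I anticipate needing is the value of $\mathbb{S}(x^{4\theta^j})\bmod (x-1)^4$; once those four reductions are in hand, the three cases follow by plugging in the corresponding $\mathbb{P}_{\bb}(x)$ and simplifying over $\mathbb{F}_2$.
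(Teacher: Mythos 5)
Your plan is correct and would go through, and at its core it is the same reduction the paper performs -- computing $\gcd(\mathbb{P}_{\bu}(x),x^4-1)$ by showing the $\mathbb{S}$-part of \eqref{eq-su(x)} is invisible modulo $x^4-1$ so that everything is governed by $\mathbb{P}_{\bb}(x)$ -- but the execution you anticipate is far heavier than necessary. The paper's proof is essentially three lines: since $|D_0\cup D_1|=(p-1)/2=2f$ is even, $\mathbb{S}(1)=0$, hence $(x-1)\mid\mathbb{S}(x)$ and therefore $(x^4-1)\mid\mathbb{S}(x^{4k})$ for every nonzero $k$, so $x^4-1$ divides the whole term $\mathbb{E}(x)=\mathbb{S}(x^{4\theta})+x^p\mathbb{S}(x^{4\theta^3})+x^{2p}\mathbb{S}(x^4)+x^{3p}\mathbb{S}(x^4)$; combined with $\gcd\bigl(x^4-1,\frac{x^{4p}-1}{x^4-1}\bigr)=1$ (over $\mathbb{F}_2$ one has $\frac{x^{4p}-1}{x^4-1}=\bigl(\frac{x^p-1}{x-1}\bigr)^4$, which takes the value $p\equiv 1$ at $x=1$), this gives $\gcd(\mathbb{P}_{\bu}(x),x^4-1)=\gcd(\mathbb{P}_{\bb}(x),x^4-1)$ at once, and the three cases are read off from $\mathbb{P}_{\bb}(x)=0$, $(1+x)^3$, $(1+x)^2$, $x(1+x)^2$. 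The ``heavy reliance on Lucas' theorem'' you fear in your $x=1+z$ expansion is a mirage: over $\mathbb{F}_2$ the Frobenius identity $(1+z)^{4m}=(1+z^4)^m\equiv 1\pmod{z^4}$ shows every monomial $x^{4m}$ appearing in the $\mathbb{S}$-terms contributes nothing to the $z$, $z^2$, $z^3$ coefficients, the constant terms sum to $2f\equiv 0$, the shift factors $x^p,x^{2p},x^{3p}$ are units modulo $z^4$, and $\frac{x^{4p}-1}{x^4-1}=\sum_{t=0}^{p-1}x^{4t}\equiv p\equiv 1\pmod{z^4}$; hence $\mathbb{P}_{\bu}(1+z)\equiv\mathbb{P}_{\bb}(1+z)\pmod{z^4}$ with no case-by-case bookkeeping, which is exactly the paper's conclusion in shifted coordinates. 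Two small remarks: your opening parity check at $x=1$ does not actually ``separate the cases'' as claimed, since all three choices of $\bb$ have even weight (it only establishes $(x-1)\mid\mathbb{P}_{\bu}(x)$ in every case); and the exponents in $\mathbb{S}(x^{4\theta^j})$ are really $4(\theta^j i \bmod p)$ rather than the integers $4\theta^j i$, though this is immaterial since both are multiples of $4$, which is all the argument uses.
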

\begin{proof}
We only need to
calculate $\gcd(\mathbb{P}_{\bu}(x),x^{4}-1)$, since  $N_0$ is  equal to the degree of the polynomial $\gcd(\mathbb{P}_{\bu}(x),x^{4}-1)$. Let $\mathbb{E}(x)=\mathbb{S}(x^{4\theta})+x^p\mathbb{S}(x^{4\theta^3})+x^{2p}\mathbb{S}(x^4)+x^{3p}\mathbb{S}(x^4)$. It follows from (\ref{eq-s(x)}) that $\mathbb{S}(1)=0$ since $\frac{p-1}{2}$ is even. Thus $(x-1)|\mathbb{S}(x^k)$ and therefore $(x^4-1)|\mathbb{S}(x^{4k})$ for any nonzero integer $k$. It then follows that $(x^4-1)|\mathbb{E}(x)$. This together with
the fact $\gcd(x^4-1,\frac{x^{4p}-1}{x^4-1})=1$ means that
$$
\gcd(\mathbb{P}_{\bu}(x),x^{4}-1)=\gcd(\mathbb{P}_{\bb}(x),x^4-1),
$$
which completes the proof of this lemma.
\end{proof}

Now, we are  in a position to present the main result of this paper.

 \begin{theorem}\label{thm-1}
  Let $\bu$ be the optimal sequence of period $4p$  in Theorem \ref{thm-Su-1}.  Then the minimal polynomial of the sequence $\bu$ is $\mathbb{M}_{\bu}(x)=(x^{4p}-1)/g(x)$ and the linear complexity of $\bu$ is ${\rm LC}(u)=4p-\epsilon$, where
 \begin{enumerate}
\item [1)] $g(x)=x^4-1$ and $\epsilon=4$ if $\bb=(0,0,0,0)$;
\item [2)] $g(x)=x^3+x^2+x+1$ and $\epsilon=3$ if $\bb=(1,1,1,1)$; and
\item [3)] $g(x)=x^2-1$ and $\epsilon=2$ if $\bb=(1,0,1,0)$ or $\bb=(0,1,0,1)$.
\end{enumerate}
\end{theorem}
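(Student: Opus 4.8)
The plan is to read the theorem off directly from \eqref{eq-LCu} together with Lemmas \ref{lem-root-1} and \ref{lem-root-2}, the only genuinely new ingredient being the factorization of $x^{4p}-1$ over the splitting field. First I would record that, since $\mathbb{F}_{2^m}$ has characteristic $2$, one has
$$
x^{4p}-1=(x^p-1)^4=\prod_{i=0}^{p-1}(x-\beta^i)^4 ,
$$
so that each $p$-th root of unity $\beta^i$ occurs in $x^{4p}-1$ with multiplicity exactly $4$. Consequently
$$
\gcd(x^{4p}-1,\mathbb{P}_{\bu}(x))=\prod_{i=0}^{p-1}(x-\beta^i)^{N_i},\qquad N_i=\min\{k_i,4\},
$$
which is precisely the quantity controlled by the two preceding lemmas and by \eqref{eq-LCu}.

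Next I would invoke Lemma \ref{lem-root-1}, which asserts $N_i=0$ for every $1\le i\le p-1$; this deletes all factors $(x-\beta^i)$ with $i\ne 0$ from the gcd, leaving $\gcd(x^{4p}-1,\mathbb{P}_{\bu}(x))=(x-1)^{N_0}$. Because $x^4-1=(x-1)^4$ in characteristic $2$ and $1$ is its only root, the identical reasoning gives $\gcd(x^4-1,\mathbb{P}_{\bu}(x))=(x-1)^{N_0}$, so the two gcd's coincide. Hence it suffices to identify $\gcd(x^4-1,\mathbb{P}_{\bu}(x))$, which is exactly the content of Lemma \ref{lem-root-2}: in each of the three cases for $\bb$ it equals the stated polynomial $g(x)$, and $N_0=\deg(g(x))=\epsilon$.

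Finally I would assemble the two conclusions. For the minimal polynomial, Lemma \ref{lem-LC}(1) yields
$$
\mathbb{M}_{\bu}(x)=\frac{x^{4p}-1}{\gcd(x^{4p}-1,\mathbb{P}_{\bu}(x))}=\frac{x^{4p}-1}{g(x)},
$$
matching each case. For the linear complexity, \eqref{eq-LCu} together with Lemma \ref{lem-root-1} collapses the sum to $\sum_{i=0}^{p-1}N_i=N_0$, whence ${\rm LC}(\bu)=4p-N_0=4p-\epsilon$, again agreeing case by case with the value of $\deg(g(x))$ recorded in Lemma \ref{lem-root-2}.

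I do not expect a real obstacle here, since all of the analytic work---establishing $N_i=0$ for $i\ne 0$ through the values of $\mathbb{S}$ and $\mathbb{T}$ at $\beta$ in Lemma \ref{lem-Ding}, and evaluating $\gcd(\mathbb{P}_{\bu}(x),x^4-1)$ via the vanishing $\mathbb{S}(1)=0$---is already carried out in Lemmas \ref{lem-root-1} and \ref{lem-root-2}. The one point that must be stated carefully is the characteristic-$2$ identity $x^{4p}-1=(x^p-1)^4$, which is what legitimizes replacing the gcd against $x^{4p}-1$ by the gcd against $x^4-1$; without it one would have to argue separately that no $\beta^i$ with $i\ne 0$ contributes a repeated factor to the gcd.
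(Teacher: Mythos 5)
Your proposal is correct and follows essentially the same route as the paper, whose proof of this theorem is literally the one-line deduction from \eqref{eq-LCu} and Lemmas \ref{lem-root-1} and \ref{lem-root-2}; you have merely made explicit the routine bookkeeping the paper leaves implicit, namely the characteristic-$2$ factorization $x^{4p}-1=(x^p-1)^4=\prod_{i=0}^{p-1}(x-\beta^i)^4$, the resulting identity $\gcd(x^{4p}-1,\mathbb{P}_{\bu}(x))=(x-1)^{N_0}=\gcd(x^4-1,\mathbb{P}_{\bu}(x))$ once Lemma \ref{lem-root-1} removes all $\beta^i$ with $i\neq 0$, and the observation that each stated $g(x)$ equals $(x-1)^{\epsilon}$ over $\mathbb{F}_2$. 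No gap: these details are exactly what the paper's citation of \eqref{eq-LCu} presupposes.
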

\begin{proof}
The conclusions follow directly from (\ref{eq-LCu}), and Lemmas \ref{lem-root-1} and \ref{lem-root-2}.
\end{proof}

The following example computed by Magma confirms the results in Theorem \ref{thm-1}.

\begin{example}\label{eg-1}
Let $p=29=4f+1=x^2+4y^2$ for $x=5$, $y=-1$, and $f=7$. Let $\alpha=2$ be a primitive element of $\mathbb{Z}_{p}$.
Then four cyclotomic classes of order $4$ with respect to $\mathbb{Z}_p$ are given by
\begin{eqnarray*}
D_0&=&\{ 1, 7, 16, 20, 23, 24, 25 \},\\
D_1&=&\{ 2, 3, 11, 14, 17, 19, 21 \},\\
D_2&=&\{ 4, 5, 6, 9, 13, 22, 28 \},\\
D_3&=&\{ 8, 10, 12, 15, 18, 26, 27 \}
\end{eqnarray*}
Based on $D_0\cup D_1$, $D_0\cup D_3$,  $D_1\cup D_2$, $D_2\cup D_3$, we generate the following four Ding-Helleseth-Lam sequences
\begin{eqnarray*}
{\bs}_1&=&(0, 1, 1, 1,  0,  0,  0, 1,  0,  0,  0, 1,  0,  0, 1,  0, 1, 1,  0, 1, 1, 1,  0, 1, 1, 1,  0,  0,  0);\\
{\bs}_2&=&(0, 1,  0,  0,  0,  0,  0, 1, 1,  0, 1,  0, 1,  0,  0, 1, 1,  0, 1,  0, 1,  0,  0, 1, 1, 1, 1, 1,  0);\\
{\bs}_3&=&(0,  0, 1, 1, 1, 1, 1,  0,  0, 1,  0, 1,  0, 1, 1,  0,  0, 1,  0, 1,  0, 1, 1,  0,  0,  0,  0,  0, 1);\\
{\bs}_4&=&(0, 0, 0, 0, 1, 1, 1, 0, 1, 1, 1, 0, 1, 1, 0, 1, 0, 0, 1, 0, 0, 0, 1, 0, 0, 0, 1, 1, 1).
\end{eqnarray*}
Take $\bb=(0,0,0,0)$, then $\bu$ in Theorem \ref{thm-Su-1} is the following sequence of period $116$:
\begin{eqnarray*}
&&(0, 0, 0, 0, 0, 1, 1, 0, 1, 1, 1, 0, 1, 1, 0, 1, 1, 1, 1, 0, 1, 1, 1, 0, 1, 0, 1, 1, 0, 0, 0, 0, 0, 1, 1, 1, 1, 0, 1,\\&& 0, 0, 0, 0, 0, 1, 1, 0, 1, 1, 1, 0, 1, 1, 0, 1, 1, 1, 1, 0, 1, 1, 1, 0, 1, 0, 1, 1, 0, 0, 0, 0, 0, 1, 1, 1, 1, 0, 1, \\&& 0, 0, 0, 0, 0, 1, 0, 1, 0, 1, 1, 0, 1, 0, 1, 0, 1, 0, 0, 0, 1, 0, 1, 1, 0, 0, 0, 0, 0, 0, 1, 0, 0, 1, 0, 1, 1).
\end{eqnarray*}
The linear complexity of this sequence is $112$.
Take $\bb=(1,1,1,1)$, then $\bu$ in Theorem \ref{thm-Su-1} is the following sequence of period $116$:
\begin{eqnarray*}
&&(1, 1, 1, 1, 1, 0, 0, 1, 0, 0, 0, 1, 0, 0, 1, 0, 0, 0, 0, 1, 0, 0, 0, 1, 0, 1, 0, 0, 1, 1, 1, 1, 1, 0, 0, 0, 0, 1, 0,\\&&0, 1, 1, 0, 1, 0, 1, 1, 0, 1, 1, 1, 0, 0, 1, 1, 0, 0, 0, 1, 1, 1, 0, 0, 0, 1, 1, 0, 0, 0, 0, 0, 0, 1, 1, 1, 1, 0, 0, 1, \\&& 1, 1, 1, 1, 1, 0, 1, 0, 1, 0, 0, 1, 0, 1, 0, 1, 0, 1, 1, 1, 0, 1, 0, 0, 1, 1, 1, 1, 1, 1, 0, 1, 1, 0, 1, 0, 0).
\end{eqnarray*}
The linear complexity of this sequence is $113$. Take $\bb=(1,0,1,0)$ we get the following sequence of period $116$:
\begin{eqnarray*}
&&(1, 0, 1, 0, 1, 1, 0, 0, 0, 1, 0, 0, 0, 1, 1, 1, 0, 1, 0, 0, 0, 1, 0, 0, 0, 0, 0, 1, 1, 0, 1, 0, 1, 1, 0, 1, 0, 0, 0,\\&&1, 1, 0, 0, 0, 0, 0, 1, 1, 1, 0, 1, 1, 0, 0, 1, 1, 0, 1, 1, 0, 1, 1, 0, 1, 1, 0, 0, 1, 0, 1, 0, 1, 1, 0, 1, 0, 0, 1, 1, \\&& 0, 1, 0, 1, 0, 0, 0, 0, 0, 0, 1, 1, 1, 1, 1, 1, 1, 1, 0, 1, 1, 1, 1, 0, 0, 1, 0, 1, 0, 1, 1, 1, 0, 0, 0, 0, 1).
\end{eqnarray*}
The linear complexity of this sequence is $114$.
\end{example}

With the same method used in Theorem \ref{thm-Su-1}, more optimal binary sequences of period $4p$ were obtained in \cite{Su-Y-F} from the Ding-Helleseth-Lam sequences.

 \begin{theorem}[Theorem 2, \cite{Su-Y-F}]\label{thm-Su-2}
 Let $\bb=(b(0),b(1),b(2),b(3))$ be a binary sequence with $b(0)=b(2)$ and $b(1)=b(3)$, and $({\ba}_0,{\ba}_1,{\ba}_2,{\ba}_3)$ be chosen from
{\small
 \begin{eqnarray}\label{eq-si}
 \{({\bs}_2,{\bs}_3,{\bs}_1,{\bs}_1),({\bs}_4,{\bs}_1,{\bs}_2,{\bs}_2),({\bs}_1,{\bs}_4,{\bs}_2,{\bs}_2),({\bs}_4,{\bs}_1,{\bs}_3,{\bs}_3),({\bs}_1,{\bs}_4,{\bs}_3,{\bs}_3),({\bs}_2,{\bs}_3,{\bs}_4,{\bs}_4)\}.
\end{eqnarray}
}
Then the binary sequence $\bu$ constructed from \eqref{eq-construction} is optimal with respect to the autocorrelation magnitude.
 \end{theorem}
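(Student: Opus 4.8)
The plan is to avoid recomputing six autocorrelation spectra from scratch and instead to lean on the elementary fact that the out-of-phase autocorrelation spectrum is invariant under decimation by a unit: if $\bv$ is the $r$-decimation of $\bu$ with $\gcd(r,4p)=1$, then $R_{\bv}(\tau)=R_{\bu}(r\tau)$, so as $\tau$ runs through $1,\dots,4p-1$ the two spectra agree as multisets and optimality with respect to the autocorrelation magnitude is preserved. The first step is to see how a well-chosen decimation acts on the construction \eqref{eq-construction}. Pick $r$ by the Chinese Remainder Theorem with $r\equiv 1\pmod 4$ and $r\equiv\theta\pmod p$, and write each index as $t=4s+k$, $0\le k\le 3$. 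A direct index computation then shows that the $r$-decimation of $I(\ba_0+b(0),L^{d}\ba_1+b(1),L^{2d}\ba_2+b(2),L^{3d}\ba_2+b(3))$ is again of the form \eqref{eq-construction}, with the same $\bb$ and with every base sequence $\ba_k$ replaced by its $\theta$-decimation $x\mapsto\ba_k(\theta x)$. The point that makes this clean is that $4d\equiv 1\pmod p$ forces the column-dependent cyclic shift that would a priori appear to vanish identically: it equals $k\bigl((\theta-1)d+d(1-\theta)\bigr)=0$, so the interleaved shape survives exactly.

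With this in hand the next step is combinatorial. The $\theta$-decimation sends a Ding-Helleseth-Lam sequence of support $D_i\cup D_{i+1}$ to one of support $\theta^{-1}(D_i\cup D_{i+1})=D_{i-1}\cup D_i$, which on $\bs_1,\bs_2,\bs_3,\bs_4$ is the $4$-cycle $\bs_1\to\bs_2\to\bs_4\to\bs_3\to\bs_1$. Applying it repeatedly to the optimal tuple $(\bs_3,\bs_2,\bs_1,\bs_1)$ of Theorem \ref{thm-Su-1} yields in turn $(\bs_1,\bs_4,\bs_2,\bs_2)$, $(\bs_2,\bs_3,\bs_4,\bs_4)$ and $(\bs_4,\bs_1,\bs_3,\bs_3)$, which are exactly three of the six tuples in \eqref{eq-si}. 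By the invariance above, these three are optimal at no extra cost, directly from Theorem \ref{thm-Su-1}.

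The remaining tuples $(\bs_2,\bs_3,\bs_1,\bs_1)$, $(\bs_4,\bs_1,\bs_2,\bs_2)$ and $(\bs_1,\bs_4,\bs_3,\bs_3)$ lie in a second $\theta$-decimation orbit that is not tied to the first by any form-preserving equivalence: a reflection $r\equiv-1\pmod 4$ permutes the interleaved columns so that the last two no longer carry a common base sequence, violating the constraint ${\ba}_3={\ba}_2$ built into \eqref{eq-construction}, while complementation drives each component out of the Ding-Helleseth-Lam family (the complement of a support $D_i\cup D_{i+1}$ acquires the extra point $0$). It therefore suffices to prove optimality for one representative, say $(\bs_2,\bs_3,\bs_1,\bs_1)$, and then to propagate it across the orbit by the decimation of the first step. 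For that representative I would use the direct method of \cite{Su-Y-F}: expand $R_{\bu}(4\tau_1+\tau_2)$ through the interleaved structure as a sum of auto- and cross-correlations of the shifted components $L^{kd}\bs_i$ at argument $\tau_1$ (with a carry correction when $k+\tau_2\ge 4$), and evaluate each correlation by the cyclotomic numbers of order $4$, using that $f$ is odd and $y=\pm1$ (the hypothesis of Theorem \ref{thm-Ding}), under which these numbers take their special values.

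The main obstacle is this last computation. The analysis splits over the residue $\tau_2\in\{0,1,2,3\}$ and over the cyclotomic class of $\tau_1$, and the delicate feature is that the cross-correlation terms between distinct Ding-Helleseth-Lam sequences must interact with the offsets $d,2d,3d$ so that the four partial contributions always telescope to a total in $\{0,\pm4\}$; the carry term and the $b(k)$ contributions (which cancel in pairs since $b(0)=b(2)$ and $b(1)=b(3)$) are exactly where care is needed. Everything else is reduced, by the decimation symmetry, to this single case.
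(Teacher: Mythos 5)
First, a point of reference: the paper does not prove this statement at all --- it is imported verbatim as Theorem 2 of \cite{Su-Y-F}, and the underlying argument there is a direct autocorrelation computation via the interleaved structure and cyclotomic numbers of order $4$. So your proposal can only be judged on its own merits. Its first half checks out: with $r\equiv 1\pmod 4$ and $r\equiv\theta\pmod p$ (such $r$ exists by CRT and is coprime to $4p$), writing $t=4s+k$ gives $v(4s+k)=c_k\bigl(rs+k(r-1)/4\bigr)$ for the $r$-decimation $\bv$ of $\bu=I(\bc_0,\bc_1,\bc_2,\bc_3)$, and since $(r-1)/4\equiv (\theta-1)d\pmod p$ the extra shift $k(r-1)/4$ converts $L^{kd}(\ba_k)$ exactly into $L^{kd}(\ba'_k)$ with $\ba'_k$ the $\theta$-decimation of $\ba_k$; your bookkeeping formula is garbled but the cancellation is real. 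Likewise the support computation $\theta^{-1}D_i=D_{i-1}$ gives the $4$-cycle $\bs_1\to\bs_2\to\bs_4\to\bs_3\to\bs_1$, and iterating on the Theorem \ref{thm-Su-1} tuple $(\bs_3,\bs_2,\bs_1,\bs_1)$ does produce $(\bs_1,\bs_4,\bs_2,\bs_2)$, $(\bs_2,\bs_3,\bs_4,\bs_4)$, $(\bs_4,\bs_1,\bs_3,\bs_3)$, with $R_{\bv}(\tau)=R_{\bu}(r\tau)$ preserving the value set $\{0,\pm4\}$. So three of the six tuples genuinely follow from Theorem \ref{thm-Su-1} with no new computation --- a nice reduction that neither this paper nor (apparently) \cite{Su-Y-F} exploits.

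The genuine gap is the second orbit. For $(\bs_2,\bs_3,\bs_1,\bs_1)$, $(\bs_4,\bs_1,\bs_2,\bs_2)$, $(\bs_1,\bs_4,\bs_3,\bs_3)$ you prove nothing: you say you ``would use the direct method of \cite{Su-Y-F}'' --- expand $R_{\bu}(4\tau_1+\tau_2)$ into component auto- and cross-correlations and evaluate them through cyclotomic numbers of order $4$ --- and you yourself flag this computation as the main obstacle. But that case analysis (over $\tau_2\in\{0,1,2,3\}$, the cyclotomic class of $\tau_1$, and the carry corrections) \emph{is} the entire substance of the theorem for these tuples; deferring it to the cited paper's technique without executing it leaves half the statement unproven. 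Your heuristic that no form-preserving equivalence ties the two orbits (reflection breaks the $\ba_2=\ba_3$ constraint, complementation exits the Ding-Helleseth-Lam family) is plausible but is an argument for \emph{needing} the computation, not a substitute for it. One further consequence you should confront: your orbit of $(\bs_2,\bs_3,\bs_1,\bs_1)$ also contains $(\bs_3,\bs_2,\bs_4,\bs_4)$, which is \emph{not} among the six tuples in \eqref{eq-si}; if your reduction and the theorem both hold, that seventh tuple must be optimal as well. This is not a contradiction (the theorem claims no exhaustiveness), but it shows the stated list is not closed under your decimation and is exactly the kind of check the missing base-case computation would have to support.
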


Similar to the proof of Theorem \ref{thm-1}, we can also determine the minimal polynomial and linear complexity of the sequence  in Theorem \ref{thm-Su-2}. The details are left to the reader.

 \begin{theorem}\label{thm-2}
  Let $\bu$ be the optimal sequence of period $4p$ obtained in Theorem \ref{thm-Su-2}, where $b(0)=b(2)$ and $b(1)=b(3)$, and $({\ba}_0,{\ba}_1,{\ba}_2,{\ba}_3)$ be any element chosen from \eqref{eq-si}. Then the minimal polynomial of the sequence $\bu$ is $\mathbb{M}_{\bu}(x)=(x^{4p}-1)/g(x)$ and the linear complexity of $\bu$ is ${\rm LC}(u)=4p-\epsilon$, where
 \begin{enumerate}
\item [1)] $g(x)=x^4-1$ and $\epsilon=4$ if $\bb=(0,0,0,0)$;
\item [2)] $g(x)=x^3+x^2+x+1$ and $\epsilon=3$ if $\bb=(1,1,1,1)$; and
\item [3)] $g(x)=x^2-1$ and $\epsilon=2$ if $\bb=(1,0,1,0)$ or $\bb=(0,1,0,1)$.
\end{enumerate}
\end{theorem}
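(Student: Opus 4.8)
The plan is to mirror the proof of Theorem \ref{thm-1} essentially verbatim, exploiting the fact that all six choices in \eqref{eq-si} share the same interleaved skeleton as the triple $(\bs_3,\bs_2,\bs_1,\bs_1)$ treated there: in every case the third and fourth component sequences coincide ($\ba_2=\ba_3$), and the cyclic shifts are always $0,d,2d,3d$ with $d\equiv 3f+1$, $2d\equiv 2f+1$, $3d\equiv f+1 \pmod p$, independently of which $\bs_i$ are selected. First I would compute the sequence polynomial of $\bu$ exactly as in Lemma \ref{lem-polynomial-u}. Writing $\mathbb{P}_{\ba_0}(x)=\mathbb{S}(x^{\theta^{a}})$, $\mathbb{P}_{\ba_1}(x)=\mathbb{S}(x^{\theta^{b}})$ and $\mathbb{P}_{\ba_2}(x)=\mathbb{P}_{\ba_3}(x)=\mathbb{S}(x^{\theta^{c}})$ via Lemma \ref{lem-polynomial-s-i}, and pushing each cyclic shift through the interleaving with Lemma \ref{lem-Wang}, the exponent bookkeeping $x^{p-d}=x^{f}$, $x^{p-2d}=x^{2f}$, $x^{p-3d}=x^{3f}$ collapses (after multiplying by $x,x^{2},x^{3}$ and substituting $x\mapsto x^{4}$) to
$$\mathbb{P}_{\bu}(x)=\mathbb{S}(x^{4\theta^{a}})+x^{p}\mathbb{S}(x^{4\theta^{b}})+x^{2p}\mathbb{S}(x^{4\theta^{c}})+x^{3p}\mathbb{S}(x^{4\theta^{c}})+\mathbb{P}_{\bb}(x)\cdot\frac{x^{4p}-1}{x^4-1},$$
which is the exact analogue of \eqref{eq-su(x)}.

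Next I would evaluate at the nontrivial roots, reproducing Lemma \ref{lem-root-1}. For $1\le i\le p-1$ one has $\beta^{ip}=1$, the factor $\frac{x^{4p}-1}{x^4-1}$ vanishes at $\beta^{i}$ (its value there is $\sum_{j=0}^{p-1}\beta^{4ij}=0$), and the two terms carrying $\mathbb{S}(x^{4\theta^{c}})$ cancel over $\mathbb{F}_{2}$ because $\ba_2=\ba_3$; hence $\mathbb{P}_{\bu}(\beta^{i})=\mathbb{S}(\beta^{4i\theta^{a}})+\mathbb{S}(\beta^{4i\theta^{b}})$. Reading off $(a,b)$ from \eqref{eq-si} and Lemma \ref{lem-polynomial-s-i}, only three distinct pairs occur, namely $(a,b)=(3,1)$, $(2,0)$ and $(0,2)$. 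Since $4\in D_2$, I would track the cyclotomic class of $4i\theta^{a}$ and $4i\theta^{b}$ as $i$ ranges over $D_0,D_1,D_2,D_3$ and apply Lemma \ref{lem-Ding}; in each of the three pairs and each class the two summands land in classes whose values are $\mathbb{S}(\beta)$ and $\mathbb{S}(\beta)+1$, or $\mathbb{T}(\beta)$ and $\mathbb{T}(\beta)+1$, so the sum equals $1\ne 0$. This yields $N_i=0$ for all $1\le i\le p-1$. This twelve-fold class check is the only genuinely new computation in the whole argument, and organizing it cleanly is the main (and essentially only) obstacle.

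Finally, $N_0$ is handled exactly as in Lemma \ref{lem-root-2} and requires no change, because it depends only on $\bb$. Since $\mathbb{S}(1)=0$ (as $\tfrac{p-1}{2}=2f$ is even), each summand $\mathbb{S}(x^{4\theta^{\ast}})$ is divisible by $x^{4}-1$, so the entire non-$\bb$ part $\mathbb{E}(x)$ is as well; combined with $\gcd\bigl(x^4-1,\frac{x^{4p}-1}{x^4-1}\bigr)=1$ this gives $\gcd(\mathbb{P}_{\bu}(x),x^{4}-1)=\gcd(\mathbb{P}_{\bb}(x),x^{4}-1)$, whose degree equals $N_0$ and is $4,3,2$ in the respective cases. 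Substituting $\sum_{i=0}^{p-1}N_i=N_0$ into \eqref{eq-LCu} then delivers ${\rm LC}(\bu)=4p-\epsilon$ together with $\mathbb{M}_{\bu}(x)=(x^{4p}-1)/g(x)$ and the stated $g(x)$, completing the proof. Everything beyond the step-two bookkeeping transfers unchanged from Theorem \ref{thm-1}.
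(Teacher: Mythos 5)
Your proposal is correct and follows exactly the route the paper intends: the paper's own ``proof'' of Theorem \ref{thm-2} consists of the remark that it is similar to the proof of Theorem \ref{thm-1} with details left to the reader, and your computation supplies precisely those details (the generalized form of Lemma \ref{lem-polynomial-u}, the cancellation of the two $\ba_2=\ba_3$ terms, and the unchanged treatment of $N_0$). As a minor streamlining, note that all three pairs $(a,b)\in\{(3,1),(2,0),(0,2)\}$ satisfy $a-b\equiv 2\pmod 4$, so your twelve-fold class check collapses to the single observation that $4i\theta^{a}$ and $4i\theta^{b}$ always lie in cyclotomic classes whose indices differ by $2$, whence the sum is $1\neq 0$ by Lemma \ref{lem-Ding}.
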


\begin{example}\label{eg-2}
Let ${\bs}_1,{\bs}_2,{\bs}_3, {\bs}_4$ be the sequences in Example \ref{eg-1}. Take $({\ba}_0,{\ba}_1,{\ba}_2,{\ba}_3)=({\bs}_4,{\bs}_1,{\bs}_2,{\bs}_2)$ and $\bb=(0,1,0,1)$, then $\bu$ in Theorem \ref{thm-Su-2} is the following sequence of period $116$:
\begin{eqnarray*}
&&(0, 1, 1, 0, 0, 0, 1, 1, 0, 0, 0, 0, 0, 0, 1, 1, 1, 1, 0, 0, 1, 1, 1, 1, 1, 1, 0, 1, 0, 1, 0, 0, 1, 0, 1, 0, 1, 0, 1,\\&&1, 1, 0, 1, 0, 0, 1, 1, 1, 1, 1, 1, 0, 1, 1, 0, 1, 0, 0, 0, 1, 1, 1, 1, 0, 0, 1, 0, 0, 0, 1, 0, 0, 1, 0, 0, 0, 0, 1, 0, \\&& 0, 0, 1, 0, 1, 0, 0, 1, 1, 1, 1, 1, 0, 0, 0, 0, 1, 0, 0, 1, 1, 0, 1, 0, 1, 1, 0, 1, 1, 1, 0, 0, 1, 1, 0, 0, 0).
\end{eqnarray*}
The linear complexity of this sequence is $114$, which confirms the result in Theorem \ref{thm-2}.
\end{example}

\begin{remark}\label{remark-2}

Let $\bu$ and $\bu'$ be any two sequences in Theorem \ref{thm-Su-1} or \ref{thm-Su-2} which are respectively written as
$$
\bu=I({\ba}_0+b(0), L^d({\ba}_1)+b(1), L^{2d}({\ba}_2)+b(2), L^{3d}({\ba}_2)+b(3)),
$$
and
$$
\bu'=I({\ba}'_0+b'(0), L^d({\ba}'_1)+b'(1), L^{2d}({\ba'}_2)+b(2), L^{3d}({\ba'}_2)+b'(3)),
$$
where $({\ba}_0,{\ba}_1,{\ba}_2,{\ba}_3)$, $({\ba}'_0,{\ba}'_1,{\ba}'_2,{\ba}'_3)$, $\bb$
and $\bb'$ satisfy the conditions in  Theorem \ref{thm-Su-1} or \ref{thm-Su-2}. Then a natural question one would ask is when
$\bu$ and $\bu'$ are inequivalent (or equivalent). It may be difficult to answer this question in general. However, this can be done in the following cases.

Case 1), when $({\ba}_0,{\ba}_1,{\ba}_2,{\ba}_3)=({\ba}'_0,{\ba}'_1,{\ba}'_2,{\ba}'_3)$ and $\bb$ is the complement of $\bb'$: In this case, it is obvious that $\bu$ and $\bu'$ are equivalent since $\bu$ is the complement of $\bu'$.

Case 2), when $\bb\neq \bb'$ and $\bb$ is not the complement of $\bb'$:  In this case, $\bu$ and $\bu'$ are inequivalent due to Lemma \ref{lemma-minimal-relaiton}, and Theorems \ref{thm-1} and \ref{thm-2}.

\end{remark}

\section{Concluding Remarks}

In this paper, the minimal polynomial and linear complexity of the optimal sequences with period $4p$  from interleaving four Ding-Helleseth-Lam sequences of period $p$ were completely determined via the sequence polynomial approach. It turns out this class of binary optimal sequences have very large linear complexity.
It would be interesting to construct more binary sequences with optimal autocorrelation and large linear complexity.

\section*{Acknowledgments}

The author is very grateful to the reviewers and the Editor for their valuable comments that improved the presentation and quality of this paper. This work was supported by
the Natural Science Foundation of China under Grants 11571285 and 61661146003, and the Sichuan Provincial Youth Science and Technology Fund under Grant
2016JQ0004.

\end{document}